\newcommand{\bbC}{\mathbb{C}}
\newcommand{\bbZ}{\mathbb{Z}}
\newcommand{\bbR}{\mathbb{R}}
\newcommand{\cL}{\mathcal{L}}
\newcommand{\Integer}{\mathbb Z}
\newcommand{\Real}{\mathbb R}
\newcommand{\Complex}{\mathbb C}
\newcommand{\I}{{\rm i\, }}
\renewcommand{\Re}{\mathop{\rm Re}}
\renewcommand{\Im}{\mathop{\rm Im}}
\theoremstyle{plain}
\newtheorem{theorem}{Theorem}
\newtheorem{lemma}[theorem]{Lemma}
\newtheorem{proposition}[theorem]{Proposition}
\newtheorem{definition}{Definition}
\theoremstyle{definition}
\newtheorem{remark}{Remark}
\title{Discrete pluriharmonic functions as solutions of linear pluri-Lagrangian systems}
\author{A.I. Bobenko\thanks{Institut
f\"ur Mathematik, Technische Universit\"at Berlin, Strasse~des 17.~Juni 136,
10623 Berlin, Germany}\and Yu.B. Suris$^*$}
\date{\today}
\begin{document}
\maketitle

\begin{abstract}
Pluri-Lagrangian systems are variational systems with the multi-dimensional consistency property.
This notion has its roots in the theory of pluriharmonic functions, in the Z-invariant models of statistical mechanics, in the theory of variational symmetries going back to Noether and in the theory of discrete integrable systems.
 A $d$-dimensional pluri-Lagrangian problem can be described
as follows: given a $d$-form $L$ on an $m$-dimensional space, $m > d$,
whose coefficients depend on a function $u$ of $m$ independent variables
(called field), find those fields $u$ which deliver critical points to the action functionals
$S_\Sigma=\int_\Sigma L$ for any $d$-dimensional manifold $\Sigma$ in the $m$-dimensional space. We investigate discrete 2-dimensional linear pluri-Lagrangian systems, i.e. those with quadratic Lagrangians $L$. The action is a discrete analogue of the Dirichlet energy, and solutions are called discrete pluriharmonic functions. We classify linear pluri-Lagrangian systems with Lagrangians depending on diagonals. They are described by generalizations of the star-triangle map. Examples of more general quadratic Lagrangians are also considered.
\medskip

\noindent Keywords: discrete Laplace equation, pluriharmonic function, pluri-Lagrangian
system, star-triangle map, discrete complex analysis, discrete integrable systems.
\medskip

%\noindent Mathematics Subject Classification: 15A03, 37K25 ??
\end{abstract}

%-------------------------------------------------------------------------------
\section{Introduction}\label{s:intro}

In the last decade, a new understanding of integrability of discrete systems as their multi-dimensional consistency has been a major breakthrough \cite{BS1}, \cite{N}. This led to classification of discrete 2-dimensional integrable systems (ABS list)  \cite{ABS}, which turned out to be rather influential. According to the concept of multi-dimensional consistency, integrable two-dimensional systems can be imposed in a consistent way on all two-dimensional sublattices of a lattice $\bbZ^m$ of arbitrary dimension. This means that the resulting multi-dimensional system possesses solutions whose restrictions to any two-dimensional sublattice are generic solutions of the corresponding two-dimensional system. To put this idea differently, one can impose the two-dimensional equations on any quad-surface in $\bbZ^m$ (i.e., a surface composed of elementary squares), and transfer solutions from one such surface to another one, if they are related by a sequence of local moves, each one involving one three-dimensional cube, like the moves shown of Fig. \ref{Fig: local moves}.

\begin{figure}[htbp]
\begin{center}
\includegraphics[width=0.5\textwidth]{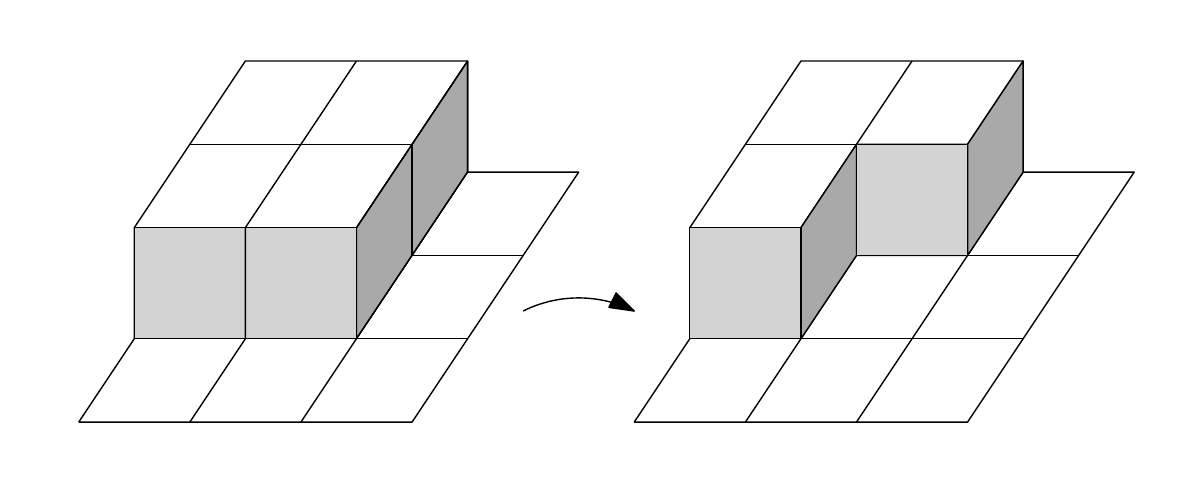}
\caption{Local move of a quad-surface involving one three-dimensional cube}
\label{Fig: local moves}
\end{center}
\end{figure}

A further fundamental conceptual development was initiated by Lobb and Nijhoff \cite{LN1} and further generalized in various directions in \cite{LN2, LNQ, XNL, BS2, BollSuris2}. This develpoment deals with variational (Lagrangian) formulation of discrete multi-dimensionally consistent systems.
%Solutions of any ABS equation on any quad surface $\Sigma$ in $\bbZ^m$ are critical points of a certain action functional $S_\Sigma=\int_\Sigma\cL$ obtained by integration of a suitable discrete Lagrangian 2-form $\cL$. Lobb and Nijhoff observed that the value of the action functional remains invariant under local changes of the underlying quad-surface, and suggested to consider this as a defining feature of integrability. Their results, found on the case-by-case basis for some equations of the ABS list, have been extended to the whole list and were given a more conceptual proof in \cite{BS2} (generalized for asymmetric systems of quad-equations in \cite{BollSuris2}). Subsequently, this research was generalized in various directions: for non-autonomous PDEs \cite{XNL}, for multi-field two-dimensional systems \cite{LN2}, for dKP, the fundamental three-dimensional discrete integrable system \cite{LNQ}, and for the discrete time Calogero-Moser system, an important one-dimensional integrable system \cite{YLN}.
Its main idea can be summarized as follows: solutions of integrable systems deliver critical points simultaneously for actions along all possible manifolds of the corresponding dimension in multi-time; the Lagrangian form is closed on solutions. This idea is, doubtless, rather inventive (not to say exotic) in the framework of the classical calculus of variations. However,  it has significant precursors. These are:
\begin{itemize}
\item Theory of pluriharmonic functions and, more generally, of pluriharmonic maps \cite{R, OV, BFPP}. By definition, a pluriharmonic function of several complex variables $f:\bbC^m\to\bbR$ minimizes the Dirichlet functional $E_\Gamma=\int_\Gamma |(f\circ \Gamma)_z|^2dz\wedge d\bar z$ along any holomorphic curve in its domain $\Gamma:\bbC\to\bbC^m$. Differential equations governing pluriharmonic functions (and maps) are heavily overdetermined. Therefore it is not surprising that they belong to the theory of integrable systems.
\item Baxter's Z-invariance of solvable models of statistical mechanics  \cite{Bax1, Bax2}. This concept is based on invariance of the partition function of solvable models under elementary local transformations of the underlying planar graph. It is well known (see, e.g., \cite{BMS}) that one can associate the planar graphs underling these models with quad-surfaces in $\bbZ^m$. On the other hand, the classical mechanical analogue of the partition function is the action functional. This makes the relation of Z-invariance to the concept of closedness of the Lagrangian 2-form rather natural, at least at the heuristic level. Moreover, this relation has been made mathematically precise for a number of models, through the quasiclassical limit, in the work of Bazhanov, Mangazeev, and Sergeev \cite{BMS1, BMS2}.
\item The classical notion of variational symmetries,  going back to the seminal work of E.~Noether \cite{Noether}, turns out to be directly related to the idea of the closedness of the Lagrangian form in the multi-time. This was further elucidated in \cite{S2}.
\end{itemize}

Especially the relation with the pluriharmonic functions motivates a novel term we have introduced to describe the situation we are interested in, namely: given a $d$-form $\cL$ in the $m$-dimensional space ($d<m$), depending on a function $u$ of $m$ variables, one looks for functions $u$ which deliver critical points to actions $S_\Sigma=\int_\Sigma \cL$ corresponding to any $d$-dimensional manifold $\Sigma$. We call this a  {\em pluri-Lagrangian problem} and claim that integrability of variational systems should be understood as the existence of the pluri-Lagrangian structure. We envisage that this notion will play a very important role in the future development of the theory of integrable systems.

A general theory of one-dimensional pluri-Lagrangian systems has been developed in \cite{S}. It was demonstrated that for $d=1$ the pluri-Lagrangian property is characteristic for commutativity of Hamiltonian flows in the continuous time case and of symplectic maps in the discrete time case. This property yields that the exterior derivative of the multi-time Lagrangian 1-form is constant, $d\cL={\rm const}$. Vanishing of this constant (i.e., closedness of the Lagrangian 1-form) was shown to be related to integrability in the following, more strict, sense:
\begin{itemize}
\item[--] in the continuous time case, $d\cL=0$ is equivalent for the Hamiltonians of the commuting flows to be in involution,

\item[--] in the discrete time case, for one-parameter families of commuting symplectic maps, $d\cL=0$ is equivalent to the {\em spectrality property} \cite{KS}, which says that the derivative of the Lagrangian with respect to the parameter of the family is a generating function of common integrals of motion for the whole family.
\end{itemize}

A general theory of discrete two-dimensional pluri-Lagrangian systems has been developed in \cite{BPS2}. The main building blocks of the multi-time Euler-Lagrange equations for a discrete pluri-Lagrangian problem with $d=2$ were identified. These are the so called 3D-corner equations. The notion of consistency of the system of 3D-corner equations was discussed, and this system was analyzed for a special class of three-point 2-forms, corresponding to integrable quad-equations of the ABS list. On this way, a conceptual gap of the work by Lobb and Nijhoff was closed by showing that the corresponding 2-forms are closed not only on solutions of (non-variational) quad-equations, but also on general solutions of the corresponding multi-time Euler-Lagrange equations.

In the present paper, we study the case of quadratic Lagrangian 2-forms, leading to linear 3D-corner equations.

%------------------------------------------------------------------------------------------------
\section{Discrete pluri-Lagrangian problem and discrete pluriharmonic functions}

The following notations are used throughout the paper. The independent discrete
variables on the lattice $\Integer^N$ are denoted $n=(n_1,n_2,\dots,n_N)$. The lattice shifts are denoted by
$T_i:n\to n+ e_i$, where $e_i$ is the $i$-th coordinate vector.
\begin{itemize}
\item
Single superscript $i$ means that an object is associated with the edge $(n,n+e_i)$
and double superscript $ij$ is used for objects associated with the plaquette
$\sigma^{ij}=(n,n+e_i,n+e_i+e_j,n+e_j)$.
\item
Subscripts  denote shifts on the lattice. For instance, the fields at the corners of the plaquette
$\sigma^{ij}$ are denoted by $u, u_i, u_{ij}, u_j$.
\end{itemize}

In the most general form, consistent variational equations can be described as
follows.

\begin{definition}\label{def:pluriLagr problem} {\bf (2D pluri-Lagrangian problem)}

\begin{itemize}
\item Let $\cL$ be a discrete 2-form, i.e., a real-valued function of oriented elementary squares
\[
\sigma^{ij}=\left(n,n+e_{i},n+e_{i}+e_{j},n+e_{j}\right)
\]
of $\bbZ^m$, such that $\cL\left(\sigma^{ij}\right)=-\cL\left(\sigma^{ji}\right)$. We will assume that $\cL$ depends on some field assigned to the points of $\bbZ^m$, that is, on some $u:\bbZ^m\to \bbC$. More precisely, $\cL(\sigma^{ij})$ depends on the values of $u$ at the four vertices of $\sigma^{ij}$:
\begin{equation}\label{L-v}
 \cL(\sigma^{ij})=L(u,u_i,u_j,u_{ij}).
\end{equation}

\item To an arbitrary oriented quad-surface $\Sigma$ in $\bbZ^m$, there corresponds the {\em action functional}, which assigns to $u|_{V(\Sigma)}$, i.e., to the fields at the vertices of the surface $\Sigma$, the number
\begin{equation}\label{action on surface}
S_\Sigma=\sum_{\sigma^{ij}\in\Sigma}\cL(\sigma^{ij}).
\end{equation}
\item We say that the field $u:V(\Sigma)\to \bbC$ is a critical point of $S_\Sigma$, if at any interior point $n\in V(\Sigma)$, we have
\begin{equation}\label{eq: dEL gen}
    \frac{\partial S_\Sigma}{\partial u(n)}=0.
\end{equation}
Equations (\ref{eq: dEL gen}) are called {\em discrete Euler-Lagrange equations} for the action $S_\Sigma$ (or just for the surface $\Sigma$, if it is clear which 2-form $\cL$ we are speaking about).
\item We say that the field $u:\bbZ^m\to\bbC$ solves the {\em pluri-Lagrangian problem} for the Lagrangian 2-form $\cL$ if, {\em for any quad-surface $\Sigma$ in $\bbZ^m$}, the restriction $u|_{V(\Sigma)}$ is a critical point of the corresponding action $S_\Sigma$.
\end{itemize}
\end{definition}

\begin{definition}\label{def:pluriLagr system} {\bf (System of corner equations)}
A 3D-corner is a quad-surface consisting of three elementary squares adjacent to a vertex of valence 3.
The {\em system of corner equations} for a given discrete 2-form $\cL$ consists of discrete Euler-Lagrange equations for all possible 3D-corners in $\bbZ^m$. If the action for the surface of an oriented elementary cube $\sigma^{ijk}$ of the coordinate directions $i,j,k$ (which can be identified with the discrete exterior derivative $d\cL$ evaluated at $\sigma^{ijk}$) is denoted by ($T_k$ is the shift in the $k$-th coordinate direction)
\begin{equation}\label{eq: Sijk}
S^{ijk}=d\cL(\sigma^{ijk})=(T_k-I)\cL(\sigma^{ij})+(T_i-I)\cL(\sigma^{jk})+(T_j-I)\cL(\sigma^{ki}),
\end{equation}
then the system of corner equations consists of the eight equations
\begin{equation}\label{eq: corner eqs}
\begin{array}{llll}
\dfrac{\partial S^{ijk}}{\partial u}=0, & \dfrac{\partial S^{ijk}}{\partial u_i}=0, & \dfrac{\partial S^{ijk}}{\partial u_j}=0, & \dfrac{\partial S^{ijk}}{\partial u_k}=0, \\
\\
\dfrac{\partial S^{ijk}}{\partial u_{ij}}=0, & \dfrac{\partial S^{ijk}}{\partial u_{jk}}=0, & \dfrac{\partial S^{ijk}}{\partial u_{ik}}=0, & \dfrac{\partial S^{ijk}}{\partial u_{ijk}}=0
\end{array}
\end{equation}
for each triple $i,j,k$. Symbolically, this can be put as $\delta(d\cL)=0$, where $\delta$ stands for the ``vertical'' differential (differential with respect to the dependent field variables $u$).
\end{definition}

\begin{remark} {\bf(Corner equations: maps and correspondences)}
Each of eight corner equations (\ref{eq: corner eqs}) relates seven fields and may be not uniquely solvable with respect to them, i.e. generically define correspondences. This leads to a substantial complication of analysis of consistency of the system discussed in the rest of this section (non-connectedness of the variety of solutions). We do not refer to these problems here and assume that the variety of solutions is connected (or one can single out a ``physical'' connected component). In the case of linear pluri-Lagrangian systems considered in the present paper all corner equations are linear and these problems do not appear.
 \end{remark}

As demonstrated in \cite{BPS2}, the flower of any interior vertex of an oriented quad-surface in $\bbZ^m$ can be represented as a sum of (oriented) 3D-corners in $\bbZ^{m+1}$. Thus, the system of corner equations encompasses all possible discrete Euler-Lagrange equations for all possible quad-surfaces. In other words, solutions of a pluri-Lagrangian problem as introduced in Definition \ref{def:pluriLagr problem} are precisely solutions of the corresponding system of corner equations.
\medskip

\begin{definition}\label{def:L-consistency'} {\bf (Consistency of a pluri-Lagrangian problem)}
A pluri-Lagrangian problem with the 2-form $\cL$ is called {\em consistent} if, for any quad-surface $\Sigma$ flippable to the whole of $\bbZ^N$, any generic solution Euler-Lagrange equations for $\Sigma$ can be extended to a solution of the system of corner equations on the whole $\bbZ^N$.
\end{definition}

Like in the case of quad-equations \cite{ABS}, consistency is actually a local issue to be addressed for one elementary cube. The system of corner equations (\ref{eq: corner eqs}) for one elementary cube is heavily overdetermined. It consists of eight equations, each one connecting seven fields out of eight. Any six fields can serve as independent data, then one can use two of the corner equations to compute the remaining two fields, and the remaining six corner equations have to be satisfied identically. This justifies the following definition.

\begin{definition}\label{def: corner eqs consist} {\bf (Consistency of corner equations)} System (\ref{eq: corner eqs}) is called {\em consistent}, if it has the minimal possible rank 2, i.e., if exactly two of these equations are independent.
\end{definition}

The main feature of this definition is that the ``almost closedness'' of the 2-form $\cL$ on solutions of the system of corner equations is, so to say, built-in from the outset. One should compare the proof of the following theorem with similar proofs in \cite{BS2, S}.

\begin{theorem}\label{Th: almost closed}
For any triple of the coordinate directions $i,j,k$, the action $S^{ijk}$ over an elementary cube of these coordinate directions is constant on solutions of the system of corner equations (\ref{eq: corner eqs}):
\[
S^{ijk}(u,\ldots,u_{ijk})=c^{ijk}={\rm const}  \pmod{\partial S^{ijk}/\partial u=0,\ \ldots,\
\partial S^{ijk}/\partial u_{ijk}=0}.
\]
\end{theorem}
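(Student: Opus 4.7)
The approach is to read the statement off directly from Definition~\ref{def:pluriLagr system}. View $S^{ijk}$ as a smooth function (in the linear pluri-Lagrangian setting considered here, a quadratic polynomial) of the eight fields $u, u_i, u_j, u_k, u_{ij}, u_{ik}, u_{jk}, u_{ijk}$ sitting at the vertices of the elementary cube $\sigma^{ijk}$. The system of corner equations (\ref{eq: corner eqs}) is, by construction, \emph{literally} the system $\partial S^{ijk}/\partial u_\alpha=0$ with $u_\alpha$ ranging over those same eight vertex fields. This tautological observation is the whole content of the proof; no information about the explicit structure of $S^{ijk}$ coming from (\ref{eq: Sijk}) needs to be used.

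The single calculation I would perform is to write out the total differential of $S^{ijk}$ on the 8-dimensional jet space of cube configurations,
\[
dS^{ijk} \;=\; \sum_\alpha \frac{\partial S^{ijk}}{\partial u_\alpha}\, du_\alpha,
\]
and to denote by $V\subset\bbC^8$ the variety cut out by the corner equations. Since every coefficient $\partial S^{ijk}/\partial u_\alpha$ vanishes identically on $V$ (that is precisely the system defining $V$), pulling back $dS^{ijk}$ along the inclusion $V\hookrightarrow\bbC^8$ yields zero. Hence $S^{ijk}$ is locally constant along $V$.

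The remaining point is to upgrade "locally constant" to "constant", which is a question of connectedness of $V$. In the linear setting which is the focus of this paper, $S^{ijk}$ is quadratic and the corner equations are linear, so $V$ is an affine subspace of $\bbC^8$, certainly connected, and $c^{ijk}:=S^{ijk}|_V$ is a single number depending only on $i,j,k$ and on the coefficients of $\cL$. In the general nonlinear case one must instead restrict to a single connected component of $V$, which is exactly the caveat flagged in the remark on corner equations as correspondences. Connectedness of $V$ is therefore the only step requiring any care; once it is granted, the proof reduces to the one-line differentiation argument above, in parallel with the analogous arguments in \cite{BS2, S}.
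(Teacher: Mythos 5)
Your proof is correct and is essentially the paper's own argument: the paper likewise observes that the corner equations are precisely the vanishing of the gradient of $S^{ijk}$ as a function of the eight vertex fields, so $S^{ijk}$ is constant on the (connected) solution set. Your extra remarks on connectedness match the caveat the paper already makes in its remark on corner equations as correspondences, so nothing further is needed.
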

\begin{proof} On the connected six-dimensional manifold of solutions, the gradient of $S^{ijk}$ considered as a function of eight variables, vanishes by virtue of (\ref{eq: corner eqs}).
\end{proof}

\begin{definition} \label{def: pluriharmonic} {\bf (Discrete pluriharmonic functions)}

If all $\cL(\sigma^{ij})=L(u,u_i,u_j,u_{ij})$ are quadratic forms of their arguments, then the action functional $S_\Sigma$ is called the {\em Dirichlet energy} corresponding to the quad-surface $\Sigma$. We call a solution $u:\bbZ^N\to\bbC$ of the pluri-Lagrangian problem in this case {\em a discrete pluriharmonic function}.
\end{definition}

We are aware that our Definition \ref{def: pluriharmonic} is not an immediate discretization of the classical notion of pluriharmonic functions on $\bbC^m$, and might be therefore misleading. However, we believe that these two notions are close in spirit and we hope that the further developments will establish a closer relation between the classical and the discrete pluriharmonicity, including approximation theorems for harmonic functions through discrete harmonic functions, cf. \cite{ChS}.

In the context of Definition \ref{def: corner eqs consist}, functional independence is replaced by linear independence, and the rank is understood in the sense of linear algebra.

\begin{theorem}\label{Th: closed}
The 2-form $\cL$ is closed on pluriharmonic functions. Let $u:\bbZ^N\to\bbR$ be a discrete pluriharmonic function, and let $\Sigma$, $\widetilde\Sigma$ be two quad-surfaces with the same boundary, then the Dirichlet energies of harmonic functions $u|_\Sigma$ and $u|_{\widetilde\Sigma}$ coincide.
\end{theorem}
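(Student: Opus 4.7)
The plan is to prove the two assertions in turn. For the closedness $d\cL=0$, I observe that in the quadratic case the cube action
$$S^{ijk}=d\cL(\sigma^{ijk})$$
defined in (\ref{eq: Sijk}) is itself a homogeneous quadratic form of degree two in the eight vertex fields $u,u_i,u_j,u_k,u_{ij},u_{jk},u_{ik},u_{ijk}$, being a linear combination of shifted copies of the quadratic Lagrangian $L$. Euler's identity for homogeneous functions of degree two then gives
$$2\,S^{ijk}\;=\;u\,\frac{\partial S^{ijk}}{\partial u}+u_i\,\frac{\partial S^{ijk}}{\partial u_i}+\cdots+u_{ijk}\,\frac{\partial S^{ijk}}{\partial u_{ijk}},$$
and every term on the right-hand side vanishes by the system of corner equations (\ref{eq: corner eqs}). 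Hence $S^{ijk}\equiv 0$ on pluriharmonic functions, and $\cL$ is closed. Equivalently, one may argue that in the quadratic/linear setting the solution set of the corner equations is a linear subspace containing the zero configuration, so that the constant $c^{ijk}$ furnished by Theorem \ref{Th: almost closed} must equal $S^{ijk}(0,\ldots,0)=0$.

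For the second assertion, I would invoke the standard topological fact that two oriented quad-surfaces $\Sigma,\widetilde\Sigma$ in $\bbZ^m$ sharing a boundary are related by a finite sequence of elementary cube flips of the type shown in Figure \ref{Fig: local moves}; equivalently, the closed oriented quad-cycle $\Sigma-\widetilde\Sigma$ bounds a 3-chain $\sum_\alpha \varepsilon_\alpha\,\sigma^{i_\alpha j_\alpha k_\alpha}$ in the cubical complex of $\bbZ^m$, which is acyclic in degree two. Each single cube flip changes the action $S_\Sigma$ by $\pm S^{ijk}$ for the enclosed cube, and by the first part of the proof each such contribution vanishes on a pluriharmonic function. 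Telescoping over the finite sequence of flips yields
$$S_\Sigma-S_{\widetilde\Sigma}\;=\;\sum_\alpha \varepsilon_\alpha\,S^{i_\alpha j_\alpha k_\alpha}\;=\;0,$$
which is the asserted equality of Dirichlet energies.

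The only non-routine ingredient is the topological reduction used in the second part; this is however already implicit in the setup of Section \ref{s:intro}, where solutions are transferred between quad-surfaces by exactly such local cube moves, so I would cite it rather than reprove it. The substantive content of the theorem is therefore the sharpening of Theorem \ref{Th: almost closed} from $d\cL=\mathrm{const}$ to $d\cL=0$, which for quadratic Lagrangians is immediate from Euler's identity combined with the linearity of the corner equations.
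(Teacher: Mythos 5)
Your proposal is correct, and your second, ``equivalently'' formulation is verbatim the paper's own one-line proof: since the corner equations are linear, the solution variety is a linear subspace containing $u\equiv 0$, and evaluating the constant $c^{ijk}$ from Theorem~\ref{Th: almost closed} on the trivial pluriharmonic function gives $c^{ijk}=0$. Your first argument via Euler's identity for the homogeneous quadratic form $S^{ijk}$ is a genuinely different and slightly stronger route: it yields $2S^{ijk}=\sum_\alpha u_\alpha\,\partial S^{ijk}/\partial u_\alpha=0$ directly from the corner equations (\ref{eq: corner eqs}), without invoking Theorem~\ref{Th: almost closed} at all and hence without needing connectedness of the solution set -- a hypothesis the paper has to assume in general but which is automatic here. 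The second half of your argument (reduction of $S_\Sigma-S_{\widetilde\Sigma}$ to a telescoping sum of cube actions via elementary flips) is exactly the topological reduction the paper leaves implicit, and citing it rather than reproving it is appropriate.
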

\begin{proof} The constants from Theorem \ref{Th: almost closed} can be evaluated on the trivial pluriharmonic function $u\equiv 0$, which gives $c^{ijk}=0$.
\end{proof}

%-------------------------------------------------------------------------------
\section{Example: discrete complex analysis}\label{s:H}

We recall general scheme of discrete complex analysis following \cite{M_CMP, M_complex} (see also \cite{BMS, ChS, Ke, Skopenkov}).
Let $D$ be a quad-graph. Consider a generic face with the vertices $0,1,2,3$ ordered cyclically counterclockwise, and let $d$, $d^*$ be the diagonals $(02)$, $(13)$, respectively. Assume that the weights $p\in\Complex$ are defined on diagonals of all quads, such that
\[
p(d^*)=\frac{1}{p(d)}.
\]

\begin{definition}\label{d.dharmonic} {\bf (Discrete holomorphic and discrete harmonic functions)}

\begin{itemize}
\item A function on the vertices of a quad-graph $f:V(D)\to\Complex$ is called {\em discrete holomorphic} if on every face of $D$ it satisfies
\begin{equation}\label{dCR}
f_2-f_0=\I p(d)(f_1-f_3).
\end{equation}
\item A function $u:V(D)\to\Real$ is called {\em discrete harmonic} if it is a real part of a discrete holomorphic function, $u=\Re f$.
\end{itemize}
\end{definition}

One may think about planar quads with the vertex coordinates $z_0,z_1,z_2,z_3$ in the complex plane and the weights defined as $p(d)=-\I\dfrac{z_2-z_0}{z_1-z_3}$.

\begin{theorem}
A function $u$ is discrete harmonic if and only if it satisfies $\Delta u=0$
with the following discrete Laplace operator
\begin{equation}\label{Delta u}
 (\Delta u)_0=\sum_{\text{{\rm quads} $(0,1,2,3)$} \atop \text{{\rm incident to vertex} $0$}}
  \Big(\frac{1}{\Re p}(u_2-u_0)+\frac{\Im p}{\Re p}(u_1-u_3)\Big),
\end{equation}
or, equivalently, if it is critical for the Dirichlet energy
\begin{equation}\label{DEnergy}
S_D(u)=\frac{1}{2}\sum_{F(D)}
   \Bigl(\frac{1}{\Re p}(u_2-u_0)^2+2\frac{\Im p}{\Re p}(u_2-u_0)(u_1-u_3)+
   \frac{|p|^2}{\Re p}(u_1-u_3)^2\Bigr)
\end{equation}
(where $p=p(d)=p(02)$).
\end{theorem}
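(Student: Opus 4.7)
The plan is to prove a chain of equivalences: the variational characterization ($u$ critical for $S_D$) is equivalent to the Laplace equation $\Delta u = 0$, and both are equivalent to $u = \Re f$ for some discrete holomorphic $f$.

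For the variational part, I would compute $\partial S_D/\partial u(v)$ for a generic vertex $v$. In any face $F$ incident to $v$, I relabel so that $v$ sits at position $0$ of that face, and differentiate the quadratic form in (\ref{DEnergy}):
\[
\frac{\partial}{\partial u_0}\Bigl(\frac{1}{\Re p}(u_2-u_0)^2+2\frac{\Im p}{\Re p}(u_2-u_0)(u_1-u_3)+\frac{|p|^2}{\Re p}(u_1-u_3)^2\Bigr)
= -\frac{2}{\Re p}(u_2-u_0) - \frac{2\Im p}{\Re p}(u_1-u_3).
\]
Accounting for the factor $\tfrac{1}{2}$ in (\ref{DEnergy}) and summing over faces incident to $v$ yields exactly $-(\Delta u)_v$, so $u$ is critical iff $\Delta u=0$.

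Next, for ``$u=\Re f$, $f$ discrete holomorphic $\Rightarrow \Delta u=0$'', I would write $f=u+\I v$ and split (\ref{dCR}) into real and imaginary parts. From $u_2-u_0=-\Im p\cdot(u_1-u_3)-\Re p\cdot(v_1-v_3)$ one obtains the key identity
\[
\frac{1}{\Re p}(u_2-u_0)+\frac{\Im p}{\Re p}(u_1-u_3)=v_3-v_1
\]
on every face. Then $(\Delta u)_0$ is a sum over faces around a vertex of expressions $v_3-v_1$ on the two neighbors of $0$ in each face. Since consecutive faces around vertex $0$ share an edge-neighbor, this sum telescopes to $0$.

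The substantive direction is the converse, $\Delta u=0\Rightarrow u$ is harmonic, and this is the main obstacle: one must build a conjugate function $v$. My plan is to specify $v$-differences along diagonals via the CR relations,
\[
v_1-v_3=-\frac{1}{\Re p}(u_2-u_0)-\frac{\Im p}{\Re p}(u_1-u_3), \qquad v_2-v_0=\frac{\Im p}{\Re p}(u_2-u_0)+\frac{|p|^2}{\Re p}(u_1-u_3),
\]
where the second is obtained by substituting the first into the imaginary part of (\ref{dCR}). Since $D$ is bipartite, these diagonals split into two disjoint graphs $D^{\bullet}$, $D^{\circ}$ on the two color classes; the fundamental cycles of $D^{\bullet}$ are precisely the cycles $b_1\to b_2\to\cdots\to b_n\to b_1$ formed by the black neighbors $b_k$ of each white vertex $w$, and vice versa. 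The integrability condition for defining $v$ on black vertices by path integration of the prescribed 1-form is that the signed sum of $v_1-v_3$ around each such cycle vanishes; by the identity in the previous paragraph, this sum equals $(\Delta u)_w$ up to sign. Hence $\Delta u=0$ at every vertex is exactly the required closedness, and one can define $v$ on each color class (up to an additive constant per class). A short verification then shows that the two prescribed diagonal identities together recover the full complex CR equation (\ref{dCR}), so $f=u+\I v$ is discrete holomorphic with $u=\Re f$.
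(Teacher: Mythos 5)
Your proposal is correct and follows essentially the same route as the paper: the Euler--Lagrange computation identifying criticality of $S_D$ with $\Delta u=0$ is routine, and your construction of the conjugate function $v$ via the prescribed diagonal differences --- whose closedness on the two diagonal graphs is exactly $\Delta u=0$, and which together are equivalent to the Cauchy--Riemann equation (\ref{dCR}) --- is precisely the argument the paper gives in Lemma \ref{l.conjugate}. The only point worth making explicit is that the converse direction (building $v$ by path integration) uses the simple connectivity of $D$, a hypothesis the paper states in Lemma \ref{l.conjugate} but not in the theorem itself.
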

Note that
\[
\frac{|p|^2}{\Re p}=\frac{1}{\Re p^*}, \quad \frac{\Im p}{\Re p}=-\frac{\Im p^*}{\Re p^*},\quad
\frac{1}{\Re p}=\frac{|p^*|^2}{\Re p^*},
\]
where $p^*=p(d^*)=p(13)=1/p$, which shows that the Dirichlet energy is well defined.

\begin{lemma}\label{l.conjugate}
Let $u:V(D)\to\Real$ be a discrete harmonic function on a simply connected
quad-graph $D$. Then there exists a function $v:V(D)\to\bbR$ such that $f=u+\I v:V(D)\to\Complex$ is
a a discrete holomorphic function. Such a function $v$ is unique up to an additive imaginary bi-constant (function taking one constant value on all black vertices and another constant value on all white vertices of $D$), is harmonic with the same weights and is called conjugate to $u$.
\end{lemma}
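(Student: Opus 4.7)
Write $p = a + \I b$ with $a = \Re p$, $b = \Im p$, put $f = u + \I v$, and split the discrete Cauchy--Riemann equation (\ref{dCR}) into its real and imaginary parts. Solving the resulting real $2\times 2$ linear system (nonsingular since $a\neq 0$) for the two $v$-increments across a face yields
\begin{align*}
v_1 - v_3 &= -\tfrac{1}{a}(u_2 - u_0) - \tfrac{b}{a}(u_1 - u_3),\\
v_2 - v_0 &= \tfrac{b}{a}(u_2 - u_0) + \tfrac{|p|^2}{a}(u_1 - u_3).
\end{align*}
Both right-hand sides depend only on the values of $u$, so (\ref{dCR}) prescribes definite $v$-increments along both diagonals of every face of $D$. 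Let $\Gamma$ denote the planar graph on the black vertex class whose edges are the black--black diagonals, and $\Gamma^*$ the analogous graph on the white vertex class; then $\delta v$ is a candidate discrete 1-form on the edges of $\Gamma \cup \Gamma^*$.

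The plan is to build $v$ on each color class separately by integrating $\delta v$ from a base point. Well-definedness reduces, via the simple connectedness of $D$, to closedness of $\delta v$ around each face cycle of $\Gamma$ and of $\Gamma^*$. A face of $\Gamma$ is dual to a white vertex $w$, and its boundary traverses the black--black diagonals of the quads $Q_1,\dots,Q_n$ incident to $w$. Labeling each $Q_k$ in CCW order so that $w = v_0$, the relevant diagonal is precisely $(v_1,v_3)$, and a direct comparison shows that the $v_1 - v_3$ formula above is exactly the negative of the $k$-th summand of $(\Delta u)_w$ in (\ref{Delta u}). Summing around the cycle therefore gives $-(\Delta u)_w = 0$ by harmonicity of $u$. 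Closedness around faces of $\Gamma^*$ dual to black vertices follows from the same identification with the color classes swapped.

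Once $\delta v$ is closed, integration determines $v$ on each color class up to an independent additive constant, which is exactly the imaginary bi-constant ambiguity asserted in the lemma. Harmonicity of $v$ with the same weights $p$ then requires no further computation: since (\ref{dCR}) is $\Complex$-linear, $-\I f = v - \I u$ is also a discrete holomorphic function, so $v = \Re(-\I f)$ is discrete harmonic by Definition~\ref{d.dharmonic}. I expect the only delicate point to be the combinatorial bookkeeping in the closedness calculation---aligning the CCW traversal of a face of $\Gamma$ with the CCW labeling convention of each constituent quad used in (\ref{dCR}), so that the prescribed $v$-increments reassemble with consistent signs into $(\Delta u)_w$. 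Everything else (solving the $2\times 2$ system, the topological passage from face-cycle closedness to global exactness on a planar graph in a simply connected domain, and the harmonicity of $v$) is either elementary algebra or a soft planarity/linearity argument.
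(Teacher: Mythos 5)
Your proof is correct and follows essentially the same route as the paper's: you define the $v$-increments along the two diagonals of each quad by splitting the Cauchy--Riemann equation into real and imaginary parts (recovering exactly the paper's formulas (\ref{dconjugate})), and you observe that closedness of the resulting 1-form on each colour class is precisely the harmonicity condition (\ref{Delta u}) at the vertices of the opposite colour. You merely make explicit the combinatorics of the two diagonal graphs and add the clean $\bbC$-linearity argument ($-\I f$ is again holomorphic, so $v=\Re(-\I f)$ is harmonic with the same weights), details the paper's terse proof leaves implicit.
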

\begin{proof}
The conjugate function on a quad is defined by the formulas
\begin{equation}\label{dconjugate}
\begin{aligned}
 v_3-v_1 & =\frac{1}{\Re p}(u_2-u_0)+\frac{\Im p}{\Re p}(u_1-u_3),\\
 v_2-v_0 & =\frac{|p|^2}{\Re p}(u_1-u_3)+\frac{\Im p}{\Re p}(u_2-u_0).
\end{aligned}
\end{equation}
Closedness of the so defined 1-form for $v$ is equivalent to harmonicity of the function $u$ (see (\ref{Delta u})). In turn, Cauchy--Riemann equation (\ref{dCR}) for $f=u+\I v$ is equivalent to (\ref{dconjugate}).
\end{proof}

Recall (see, e.g., \cite{BMS}) that the quad-graph $D$ can be interpreted as a quad-surface in $\bbZ^N$. We are interested in extending all objects from $D$ to $\bbZ^N$. Suppose that the complex weights $p^{ij}$ are assigned to the diagonals $(n,n+e_i+e_j)$ of elementary squares $\sigma^{ij}$ of $\bbZ^N$, so that the weights $1/p^{ij}$ are assigned to the diagonals $(n+e_i,n+e_j)$.
\begin{definition}
\begin{itemize}
\item A function $f:\Integer^N\to\Complex$ is called discrete holomorphic if it
satisfies equation
\begin{equation}\label{dCR lattice}
f_{ij}-f=\I p^{ij}(f_i-f_j)
\end{equation}
on every elementary square of $\Integer^N$.
\item A function $u:\bbZ^N\to\bbR$ is called discrete pluriharmonic if its satisfies the discrete Laplace equation $\Delta u=0$, i.e., if it is critical for the Dirichlet energy
\begin{equation}\label{DEnergy lattice}
S_\Sigma(u)=\frac{1}{2}\sum_{\sigma^{ij}\in\Sigma}
   \Bigl(\frac{1}{\Re p^{ij}}(u_{ij}-u)^2+2\frac{\Im p^{ij}}{\Re p^{ij}}(u_{ij}-u)(u_i-u_j)+\frac{|p^{ij}|^2}{\Re p^{ij}}(u_i-u_j)^2\Bigr)
\end{equation}
on every quad-surface in $\Integer^N$.
\end{itemize}
\end{definition}

Like in Lemma \ref{l.conjugate}, for a discrete pluriharmonic function $u$, there exists a conjugate pluriharmonic function $v$ such that $f=u+\I v$ is a discrete holomorphic function on $\bbZ^N$. It is defined by
the formulas analogous to (\ref{dconjugate}):
\begin{equation}\label{dconjugate lattice}
\begin{aligned}
 v_j-v_i &= \frac{1}{\Re p^{ij}}(u_{ij}-u)+\frac{\Im p^{ij}}{\Re p^{ij}}(u_i-u_j),\\
 v_{ij}-v &= \frac{|p^{ij}|^2}{\Re p^{ij}}(u_i-u_j)+\frac{\Im p^{ij}}{\Re p^{ij}}(u_{ij}-u).
\end{aligned}
\end{equation}

Existence of discrete pluriharmonic functions is a condition on the weights $p^{ij}$. It can be locally reformulated as a condition on a 3D cube.

\begin{theorem}
Complex weights $p_{ij}$ correspond to a consistent discrete Laplace operator if and only if they satisfy the star-triangle equation
\begin{equation}\label{star-triangle}
 p^{ij}_k:=T_k p^{ij}=\frac{p^{ij}}{p^{ij}p^{jk}+p^{jk}p^{ki}+p^{ki}p^{ij}}.
\end{equation}
\end{theorem}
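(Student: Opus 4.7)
My plan is to use the characterization of discrete pluriharmonic functions via a conjugate holomorphic function. By the lattice analogue of Lemma \ref{l.conjugate} together with formulas (\ref{dconjugate lattice}), a real function $u$ on a $3$D cube solves the system of corner equations for the Dirichlet energy if and only if there exists a conjugate $v$ (unique up to an additive bi-constant) such that $f=u+\I v$ satisfies the Cauchy--Riemann equation (\ref{dCR lattice}) on every face of the cube. Thus the consistency of the discrete Laplace operator in the sense of Definition \ref{def: corner eqs consist} becomes equivalent to the consistency of propagating $f$ around the cube via (\ref{dCR lattice}).

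To extract the condition I would take $f,f_i,f_j,f_k$ as free initial data and use (\ref{dCR lattice}) on the three faces meeting at the vertex $0$ to express
\[
 f_{ij}=f+\I p^{ij}(f_i-f_j),\quad f_{jk}=f+\I p^{jk}(f_j-f_k),\quad f_{ki}=f+\I p^{ki}(f_k-f_i).
\]
The CR equations on the three ``far'' faces, carrying the shifted weights $p^{ij}_k$, $p^{jk}_i$, $p^{ki}_j$, then produce three expressions for $f_{ijk}$. Substituting the formulas above and equating the three resulting linear forms in $f,f_i,f_j,f_k$ yields, after the coefficient of $f$ cancels automatically, a small linear system in the six weights whose unique solution for $p^{ij}_k,\,p^{jk}_i,\,p^{ki}_j$ is exactly the star-triangle rule (\ref{star-triangle}).

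Finally I would close the loop by a dimension count: when (\ref{star-triangle}) holds the CR system admits a four-complex-parameter family of $f$'s on the cube, which, after quotienting by the two-real-parameter bi-constant ambiguity in $v$, yields a six-real-parameter family of pluriharmonic $u$'s; hence the rank of the corner equation system is $8-6=2$, matching Definition \ref{def: corner eqs consist}. Conversely, if the star-triangle rule fails, the CR system is strictly overdetermined, the space of pluriharmonic $u$'s shrinks, and the rank exceeds two. The main calculational hurdle is the bookkeeping when comparing the three expressions for $f_{ijk}$; the main conceptual step is the reduction via the conjugate function, which must correctly handle the bi-constant ambiguity.
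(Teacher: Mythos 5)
Your proof is correct and takes essentially the same route as the paper's: the paper's entire proof is the one-line remark that the statement ``follows from the consistency of discrete Cauchy--Riemann (discrete Moutard) equations'', with a pointer to \cite{DDG}. You simply supply the details that the paper outsources --- the reduction of Laplace-consistency to Cauchy--Riemann-consistency via the conjugate function (with the bi-constant fiber accounting for the rank count) and the explicit matching of the three expressions for $f_{ijk}$ that yields (\ref{star-triangle}).
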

\begin{proof}
This follows from the consistency of discrete Cauchy-Riemann (called also discrete Moutard) equations (see \cite{DDG}).
\end{proof}

One can see that if projections of a quad-surface on all two-dimensional coordinate planes are injective then this quad-surface can be chosen as a support of initial data for the map (\ref{star-triangle}). In this case the weights on a quad-surface can be chosen arbitrarily and extended to the whole of $\bbZ^N$ via the map (\ref{star-triangle}). Thus, any discrete Laplace operator on such a quad-surface is integrable in the sense that it can be extended to a consistent Laplace operator on the whole lattice.

\section{Classification of discrete pluriharmonic functions with Lagrangians depending on diagonals}

\subsection{From discrete pluriharmonic functions to vector Moutard equations}
Consider Lagrangians
\[
L^{ij}=\frac{1}{2}\alpha^{ij}(u_{ij}-u)^2+\beta^{ij}(u_{ij}-u)(u_i-u_j)+
\frac{1}{2}\gamma^{ij}(u_i-u_j)^2.
\]
Let $u:\mathbb Z^3\to \mathbb C$ be a pluriharmonic function (i.e., let it satisfy all eight corner equations for each cube).
\begin{lemma}\label{Lemma conj harm}
Equations
\begin{eqnarray*}
v_{i}-v_{j} & = & -\frac{\partial L^{ij}}{\partial u}=\frac{\partial L^{ij}}{\partial u_{ij}}=\alpha^{ij}(u_{ij}-u)+\beta^{ij}(u_i-u_j)\\
v_{ij}-v & = & -\frac{\partial L^{ij}}{\partial u_i} = \frac{\partial L^{ij}}{\partial u_j} =
-\beta^{ij}(u_{ij}-u)-\gamma^{ij} (u_i-u_j)
\end{eqnarray*}
are consistent (by virtue of the corner equations for $u$) and define the function $v:\mathbb Z^3\to\mathbb C$, called {\em conjugate pluriharmonic function}, up to a bi-constant (constant on black and on white points separately).
\end{lemma}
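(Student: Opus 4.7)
The plan is to separate two ingredients: the algebraic equivalence of the two right-hand sides in each line of the lemma (a pure property of $L^{ij}$ that does not use pluriharmonicity), and the consistent extension of $v$ across plaquettes (which does). Since $L^{ij}$ depends on $u, u_i, u_j, u_{ij}$ only through the two differences $u_{ij}-u$ and $u_i-u_j$, the identities $\partial L^{ij}/\partial u_{ij}=-\partial L^{ij}/\partial u$ and $\partial L^{ij}/\partial u_j=-\partial L^{ij}/\partial u_i$ hold identically, and substituting the explicit quadratic form yields the closed expressions displayed in the lemma. Hence each plaquette $\sigma^{ij}$ prescribes exactly two differences of $v$, one along each of its diagonals; these diagonals connect same-color pairs of vertices with respect to the natural bipartition of $\mathbb Z^3$ by the parity of $n_1+n_2+n_3$.

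Because the defining equations live on plaquettes, consistency on $\mathbb Z^3$ reduces to consistency on a single elementary cube. At each of its eight corners three plaquettes meet and contribute six diagonal equations. Three of them determine $v$ at the three opposite-color diagonal neighbors of the corner; the other three prescribe the pairwise differences among the three same-color diagonal neighbors, which form a triangle that must close. The heart of the argument is to identify this single triangle-closure identity at each corner with the corresponding corner equation. At the corner $u$ the three prescriptions $v_i-v_j$, $v_j-v_k$, $v_k-v_i$ sum telescopically to zero, and unfolding the right-hand sides gives
\[
-\frac{\partial L^{ij}}{\partial u}-\frac{\partial L^{jk}}{\partial u}-\frac{\partial L^{ki}}{\partial u}=\frac{\partial S^{ijk}}{\partial u},
\]
which vanishes by assumption. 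At a corner such as $u_{ij}$ the three plaquettes are $\sigma^{ij}$, $T_i\sigma^{jk}$, $T_j\sigma^{ki}$, and the check requires differentiating the shifted Lagrangians with respect to $u_{ij}$, which sits in the shifted ``$u_j$'' slot of $T_i\cL^{jk}$ and in the shifted ``$u_i$'' slot of $T_j\cL^{ki}$; the emerging combination of shifted coefficients $(T_i\alpha^{jk}$, $T_i\beta^{jk}$, $T_j\beta^{ki}$, $T_j\gamma^{ki})$ then reproduces the triangle closure at $u_{ij}$ term by term and equals $-\partial S^{ijk}/\partial u_{ij}$. The remaining six corners follow by cyclic relabeling.

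Having consistency on each cube, $v$ extends to all of $\mathbb Z^3$ because adjacent cubes share their plaquette data, so the two definitions of $v$ on a common face agree. Finally, the equations constrain only differences within each color class of the bipartition, so any two solutions differ by a function taking one constant value on black vertices and another on white vertices, which is exactly the asserted bi-constant freedom. I expect the main obstacle to be the bookkeeping at the six corners other than $u$ and $u_{ijk}$: tracking which argument of a shifted Lagrangian $T_\ell L^{mn}$ is being differentiated, and matching the resulting shifted coefficients with the $v$-difference formulas on the corresponding shifted plaquette.
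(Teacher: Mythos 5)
Your proof is correct and is essentially a detailed expansion of the paper's one-line argument, which simply asserts that closedness of the 1-form defining $v$ is equivalent to the corner equations for $u$; your identification of the triangle-closure condition around the three faces meeting at each cube corner with the corresponding corner equation $\partial S^{ijk}/\partial u_{\bullet}=0$ is exactly the content of that equivalence, and the computation at the corners $u$ and $u_{ij}$ checks out. (One terminological slip: at a corner the three long-diagonal equations determine $v$ at the \emph{same}-color diagonal neighbours of that corner, while the closing triangle runs over the three \emph{opposite}-color vertices; your concrete formulas nevertheless have it right.)
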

\begin{proof}
Closedness of the so defined 1-form for $v$ is equivalent to the corner equations for the function $u$.
\end{proof}

The above equations can be put as
\begin{eqnarray*}
u_{ij}-u & = & \frac{1}{\alpha^{ij}}(v_i-v_j)-\frac{\beta^{ij}}{\alpha^{ij}}(u_i-u_j),\\
v_{ij}-v & = & \Big(-\gamma^{ij}+\frac{(\beta^{ij})^2}{\alpha^{ij}}\Big)(u_i-u_j)-
\frac{\beta^{ij}}{\alpha^{ij}}(v_i-v_j),
\end{eqnarray*}
or, in the matrix form,
\begin{equation}\label{matr Moutard}
\begin{pmatrix}u_{ij}-u \\ v_{ij}-v\end{pmatrix}=A^{ij}\begin{pmatrix} u_i-u_j \\ v_i-v_j \end{pmatrix}
\end{equation}
(vector Moutard equation), with the matrix coefficients
\begin{equation}\label{matr A}
    A^{ij}=\begin{pmatrix} b^{ij} & a^{ij} \\ c^{ij} & b^{ij} \end{pmatrix},
\end{equation}
whose entries are related to the coefficients of the Lagrangians $L^{ij}$ via
\begin{equation}\label{eq: A thru L}
b^{ij}=-\frac{\beta^{ij}}{\alpha^{ij}},\quad a^{ij}=\frac{1}{\alpha^{ij}}, \quad c^{ij}=\frac{(\beta^{ij})^2-\alpha^{ij}\gamma^{ij}}{\alpha^{ij}}.
\end{equation}

We say that vector Moutard equation (\ref{matr Moutard}) is consistent if arbitrary initial data $u,v$ and $u_i,v_i$ ($i=1,2,3$) can be extended to the fields $u,v$ at all vertices of a 3D cube so that (\ref{matr Moutard}) is fulfilled on all six faces. Valid initial data for a consistent vector Moutard equation consist of 8 numbers, e.g.,
\begin{itemize}
\item the values $u,v$ and all $u_i,v_i$ for $i=1,2,3$.
\end{itemize}
These data can be put into the one-to-one correspondence with the following initial data for a pair of conjugate pluriharmonic functions $(u,v)$ within one elementary cube:
\begin{itemize}
\item values of $u$ at 6 vertices, which define by corner equations the values of $u$ at the remaining two vertices, and the values of $v$ at two vertices (one white and one black).
\end{itemize}
Thus, the task of the classification of consistent pluri-Lagrangian systems with quadratic Lagrangians depending on diagonals turns out to be equivalent to the task of the description of consistent systems of vector Moutard equations.

\subsection{Vector Moutard equations and non-commutative star-triangle relation}

The solution of the latter task is described in the following theorem.
\begin{theorem}\label{th star-triang classif}
Matrices $A^{ij}$ of the form (\ref{matr A}) assigned to plaquettes $\sigma^{ij}(n)=(n,n+e_i,n+e_i+e_j,n+e_j)$ in $\mathbb Z^N$ serve as coefficients of a consistent system of vector Moutard equations if and only if their entries satisfy
\begin{equation}\label{main cond Moutard}
\lambda a^{ij}+\mu(-1)^{|n|}b^{ij}+\nu c^{ij}=0
\end{equation}
for some fixed triple $(\lambda,\mu,\nu)$, where $|n|=n_1+\ldots+n_m$, and their evolution is expressed (in the generic case, when $\lambda,\nu\neq 0$) through a solution of {\em coupled star-triangle relations}
\begin{equation}\label{coupled star triang}
\frac{1}{p^{ij}_k}=\frac{\lambda}{\nu}\cdot\frac{q^{ij}q^{jk}+q^{jk}q^{ki}+q^{ki}q^{ij}}{q^{ij}},
\quad
\frac{1}{q^{ij}_k}=\frac{\lambda}{\nu}\cdot\frac{p^{ij}p^{jk}+p^{jk}p^{ki}+p^{ki}p^{ij}}{p^{ij}},
\end{equation}
via the following relations:
\begin{equation}\label{pq}
p^{ij}=a^{ij}+\xi b^{ij},\quad q^{ij}=a^{ij}+\eta b^{ij},
\end{equation}
and
\begin{equation}\label{pq hat}
p^{ij}_k=a^{ij}_k-\xi b^{ij}_k,\quad q^{ij}_k=a^{ij}_k-\eta b^{ij}_k,
\end{equation}
where $-\xi$, $-\eta$ are the two roots of the quadratic equation $\lambda \xi^2+\mu\xi+\nu=0$.
\end{theorem}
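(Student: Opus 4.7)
\medskip

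\textbf{Proof plan.}
The plan is to reduce matrix consistency of the vector Moutard equation \eqref{matr Moutard} on a single 3D cube to a pair of coupled scalar star-triangle relations via a linear change of variables. First, I would derive the consistency conditions at the matrix level. On $\sigma^{ijk}$ the initial data consist of $w=(u,v)^{\mathsf T}$ and $w_i,w_j,w_k$, i.e.\ eight scalars; the three bottom-face equations yield $w_{ij},w_{jk},w_{ki}$ (note that the identification $w_{ik}=w_{ki}$ forces $A^{ik}=-A^{ki}$), and the three top-face equations then give three a priori distinct expressions for $w_{ijk}$. Using algebraic independence of $w_i,w_j,w_k$ in $\Complex^2$ and matching the coefficients of each, I obtain three ``quasi-commutation'' identities
\[
A^{ij}_k A^{jk}=A^{jk}_i A^{ij},\qquad A^{jk}_i A^{ki}=A^{ki}_j A^{jk},\qquad A^{ij}_k A^{ki}=A^{ki}_j A^{ij},
\]
together with the single normalization
\[
A^{ij}_k A^{jk}+A^{ij}_k A^{ki}+A^{jk}_i A^{ki}=-I.
\]

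Next I would exploit the specific shape \eqref{matr A}. For two matrices $\tilde A,B$ of this form, the sum and difference of the diagonal entries of $\tilde A B$ are $2\tilde b\,b+\tilde a\,c+\tilde c\,a$ and $\tilde a\,c-\tilde c\,a$, while the off-diagonals are linear in the pairs $(\tilde a\,b,\tilde b\,a)$ and $(\tilde c\,b,\tilde b\,c)$. Unpacking the matrix identities in these scalar components yields an overdetermined system in the six triples $(a^{ij},b^{ij},c^{ij})$ on the cube. The key algebraic claim---which I expect to be the main obstacle---is that compatibility of this system across all triples $i,j,k$ and all vertices $n\in\Integer^N$ is equivalent to the existence of a single triple $(\lambda,\mu,\nu)$, fixed on the whole lattice, such that
\[
\lambda a^{ij}(n)+\mu(-1)^{|n|}b^{ij}(n)+\nu c^{ij}(n)=0
\]
at every plaquette. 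The alternation $(-1)^{|n|}$ arises because a coordinate shift flips the lattice parity, and the scalar identities relate coefficients on neighbouring parities with a sign reversal on $b$: the difference $(1,1)-(2,2)$ of a product of two matrices of shape \eqref{matr A} depends only on $a,c$, whereas the sum $(1,1)+(2,2)$ also involves $b$ linearly, and the quasi-commutations couple the two in opposite ways.

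With the linear constraint established, exactly two scalar parameters remain per plaquette, and the substitution $p^{ij}=a^{ij}+\xi b^{ij}$, $q^{ij}=a^{ij}+\eta b^{ij}$, with $-\xi,-\eta$ the roots of $\lambda t^2+\mu t+\nu=0$, is a linear bijection onto this two-parameter family. At a parity-flipped vertex $n+e_k$ the constraint becomes $\lambda a-\mu b+\nu c=0$, with accompanying quadratic $\lambda t^2-\mu t+\nu=0$ of roots $+\xi,+\eta$, which explains the signs in \eqref{pq hat}. Using the Vieta relations $\mu=\lambda(\xi+\eta)$, $\nu=\lambda\xi\eta$ to eliminate $c$, direct substitution transforms the matrix identities obtained in the first step into the scalar coupled star-triangle system \eqref{coupled star triang}: the update of $p^{ij}$ is driven by the $q$'s around the cube and vice versa, reflecting the parity-induced interchange of the two ``eigen-combinations'' of $A^{ij}$ under a lattice shift. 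The converse implication---that the linear constraint together with \eqref{coupled star triang} yields the matrix identities of the first step, hence consistency---is then a direct substitution.
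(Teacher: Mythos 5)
Your plan follows the same route as the paper's own proof: match the three expressions for $(u_{ijk},v_{ijk})$ obtained from the three top faces to get matrix consistency conditions (your quasi-commutations and normalization are equivalent to the paper's equations (\ref{star triang aux}), i.e.\ to (\ref{matr necessary}) plus the non-commutative star-triangle relations (\ref{matr star triang})), reduce the matrix condition to a scalar linear relation among $(a^{ij},b^{ij},c^{ij})$, propagate it with a parity flip, and diagonalize via the roots of $\lambda\xi^2+\mu\xi+\nu=0$. All the structural ideas, including the explanation of the alternating sign and of the sign change in (\ref{pq hat}), are present and correct.

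The proposal nevertheless stops short of a proof at exactly the two places that carry the weight. First, the claim you yourself flag as ``the main obstacle'' --- that compatibility of the matrix identities for matrices of the shape (\ref{matr A}) is equivalent to a single linear relation $\lambda a_i+\mu b_i+\nu c_i=0$ on the three faces at a vertex, with \emph{no further} constraints --- is precisely the content of the paper's Lemma: one must verify that in $A_1A_2^{-1}A_3-A_3A_2^{-1}A_1$ the off-diagonal entries vanish identically for this matrix shape, while both diagonal entries reduce to the single condition $\det(a_i,b_i,c_i)=0$, i.e.\ (\ref{det}). Without this computation you have neither the existence of $(\lambda,\mu,\nu)$ nor the crucial fact that consistency imposes only one scalar condition per cube. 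Second, the passage from the matrix relations to the scalar system is not ``direct substitution'': the paper must (i) verify by an explicit identity that $\lambda\hat a_i-\mu\hat b_i+\nu\hat c_i=0$ follows on the opposite faces (which is what makes a \emph{single} triple work on all of $\bbZ^N$, with $\mu\mapsto-\mu$ per parity), and (ii) compute $\det A^{ij}=\tfrac{\lambda}{\nu}(a^{ij}+\xi b^{ij})(a^{ij}+\eta b^{ij})$ under the constraint and systematically eliminate the $c$'s via $b+\xi c=-\tfrac{\lambda\xi}{\nu}(a+\xi b)$ and Vieta's relations $\mu=\lambda(\xi+\eta)$, $\nu=\lambda\xi\eta$; only then does the cross-coupling of $p$ and $q$ in (\ref{coupled star triang}) emerge. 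Your heuristics for why these steps should work are sound, but they need to be carried out for the argument to close.
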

We will prove this theorem by considering one 3D cube. It will be convenient to simplify the notations: we will denote the matrix coefficients assigned to the faces of the cube by $A_i$, $\hat{A_i}$, where, for instance,
\[
A_1=A^{23}, \quad \hat{A}_1=A^{23}_1,
\]
and other ones obtained by cyclic shifts of indices (the entries will carry the same indices).
\begin{proposition}
Vector Moutard equation is consistent if and only if the matrix coefficients $A_i$ satisfy
\begin{equation}\label{matr necessary}
A_1A_2^{-1}A_3= A_3A_2^{-1}A_1,
\end{equation}
and then the matrices $\hat{A}_i$ are given by {\em non-commutative star-triangle relations}
\begin{equation}\label{matr star triang}
-\hat{A}_i^{-1} = A_j+A_k+A_kA_i^{-1}A_j,
\end{equation}
where $(i,j,k)$ is any cyclic permutation of $(1,2,3)$. Note that the cyclic permutation of indices in (\ref{matr necessary}) leads to an equivalent equation.
\end{proposition}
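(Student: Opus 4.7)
The plan is the usual consistency-on-a-cube analysis. Fix initial data $U, U_1, U_2, U_3$ at four vertices; the three bottom-face Moutard equations, which take the uniform form $U_{ij}-U=A_k(U_i-U_j)$ for $(i,j,k)$ a cyclic permutation of $(1,2,3)$, then determine $U_{12}, U_{13}, U_{23}$. Setting $X=U_1-U$, $Y=U_2-U$, $Z=U_3-U$, $W=U_{123}-U$, each of the three top-face Moutard equations $U_{ijk}-U_i=\hat A_i(U_{ij}-U_{ik})$ yields one expression of $W$ as a linear function of $(X,Y,Z)$ whose matrix coefficients are polynomials in the $A_\ell$ and the unknown $\hat A_i$. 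Consistency is exactly the condition that all three such expressions coincide identically.

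Matching coefficients of $X$, $Y$, $Z$ then produces six matrix identities---I view this bookkeeping as the technical heart of the argument: three \emph{commutations}
\begin{equation*}
\hat A_i A_j = \hat A_j A_i \qquad (i\neq j),
\end{equation*}
together with three \emph{corner relations}
\begin{equation*}
I+\hat A_i(A_j+A_k)+\hat A_j A_k = 0,\qquad (i,j,k)\ \text{cyclic}.
\end{equation*}
Assuming invertibility of $A_i$, the commutations give $\hat A_j=\hat A_i A_j A_i^{-1}$ and $\hat A_k=\hat A_i A_k A_i^{-1}$. Substituting into the remaining commutation $\hat A_j A_k=\hat A_k A_j$ and cancelling $\hat A_i$ on the left yields the necessary condition $A_j A_i^{-1} A_k = A_k A_i^{-1} A_j$, i.e.\ (\ref{matr necessary}). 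Substituting the same into a corner identity and factoring out $\hat A_i$ produces $-\hat A_i^{-1}=A_j+A_k+A_j A_i^{-1} A_k$; replacing $A_j A_i^{-1} A_k$ by $A_k A_i^{-1} A_j$ via (\ref{matr necessary}) recovers the star-triangle form (\ref{matr star triang}).

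For the converse, one defines $\hat A_i$ by (\ref{matr star triang}) and verifies the six identities under the assumption (\ref{matr necessary}). The corner relations fall out by direct rearrangement of the defining formula, while each commutation reduces, after reversing the substitutions above, to a cyclic variant of (\ref{matr necessary}). The most delicate point---and what I expect to be the main technical obstacle---is verifying that the three cyclic variants of (\ref{matr necessary}) are mutually equivalent. This is handled by a direct algebraic manipulation: multiplying $A_j A_i^{-1} A_k = A_k A_i^{-1} A_j$ on the left by $A_i A_j^{-1}$ and on the right by $A_j^{-1} A_i$ produces $A_i A_j^{-1} A_k = A_k A_j^{-1} A_i$, the same identity with the roles of $A_i$ and $A_j$ as ``pivot'' interchanged.
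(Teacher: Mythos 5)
Your proposal is correct and follows essentially the same route as the paper: matching the three linear expressions for $(u_{123},v_{123})$ coefficient-by-coefficient yields exactly the paper's system (\ref{star triang aux}), whose ``first equations'' are your commutations $\hat A_i A_j=\hat A_j A_i$ (giving (\ref{matr necessary}) after eliminating the $\hat A$'s) and whose ``second equations'' are your corner relations (giving (\ref{matr star triang})). You merely spell out two points the paper leaves implicit, namely the converse verification and the equivalence of the cyclic variants of (\ref{matr necessary}), both of which check out.
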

\begin{proof}
Moutard equations on the three faces adjacent to the vertex 123 yield three different linear expressions for $(u_{123},v_{123})$ in terms of the initial data. Identifying the corresponding coefficients, we obtain:
\begin{eqnarray}\label{star triang aux}
-\hat{A}_3A_2=-\hat{A}_2A_3=I+\hat{A}_1(A_2+A_3),\nonumber\\
-\hat{A}_1A_3=-\hat{A}_3A_1=I+\hat{A}_2(A_3+A_1),\\
-\hat{A}_2A_1=-\hat{A}_1A_2=I+\hat{A}_3(A_1+A_2).\nonumber
\end{eqnarray}
The first equations in each of these three lines allow us to express, say, $\hat{A}_1$ and $\hat{A}_3$ linearly through $\hat{A}_2$ in two different ways, e.g.,
\[
\hat{A}_1=\hat{A}_2A_1A_2^{-1}=\hat{A}_2A_3A_2^{-1}A_1A_3^{-1}.
\]
These two expressions agree if condition (\ref{matr necessary}) is satisfied. Now the second equations in each of the three lines (\ref{star triang aux}) yield linear equations for each of $\hat{A}_i$, which result in (\ref{matr star triang}).
\end{proof}

For matrices of the form $A=\begin{pmatrix} b & a \\ c & b \end{pmatrix}$, as in (\ref{matr A}), we will give a convenient resolution of the constraint (\ref{matr necessary}), which will allow us to give a complete solution of the system consisting of  (\ref{matr necessary}), (\ref{matr star triang}) on $\bbZ^N$.
\begin{lemma}
Matrix equation (\ref{matr necessary}) is equivalent to
\begin{equation}\label{det}
\left|\begin{array}{ccc} a_1 & b_1 & c_1 \\
                         a_2 & b_2 & c_2 \\
                         a_3 & b_3 & c_3
\end{array}\right|=0,
\end{equation}
i.e., there exist $\lambda,\mu,\nu\in\bbR$ such that
\begin{equation}\label{lin dep}
\lambda a_i+\mu b_i+ \nu c_i=0, \quad i=1,2,3.
\end{equation}
\end{lemma}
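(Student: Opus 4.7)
The plan is to collapse the matrix identity (\ref{matr necessary}) to a single scalar condition and then to recognize that condition as the vanishing of the determinant in (\ref{det}). The key structural observation is that, writing each $A_i$ from (\ref{matr A}) as $A_i = b_i I + N_i$ with
\[
N_i = \begin{pmatrix} 0 & a_i \\ c_i & 0 \end{pmatrix},
\]
products $N_i N_j$ are diagonal (entries $a_i c_j$ and $c_i a_j$) while the $N_i$ themselves are off-diagonal. Consequently, the noncommutative part of any polynomial in the $A_i$ will lie in the one-dimensional subspace spanned by $\sigma_3 = \mathrm{diag}(1,-1)$, which is what makes a reduction to a single scalar equation possible.

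First I would multiply (\ref{matr necessary}) through by $\det A_2 = b_2^2 - a_2 c_2$, using $(\det A_2)\, A_2^{-1} = b_2 I - N_2$, and rearrange to
\[
b_2 (A_1 A_3 - A_3 A_1) = A_1 N_2 A_3 - A_3 N_2 A_1.
\]
Expanding via $A_i = b_i I + N_i$ and the identity $[N_i, N_j] = (a_i c_j - a_j c_i)\,\sigma_3$, the left-hand side equals $b_2 (a_1 c_3 - a_3 c_1)\sigma_3$, while the right-hand side telescopes to
\[
b_1 [N_2, N_3] + b_3 [N_1, N_2] + (N_1 N_2 N_3 - N_3 N_2 N_1).
\]

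The one point that requires some care, and the main (albeit mild) obstacle, is to verify that the triple-product term $N_1 N_2 N_3 - N_3 N_2 N_1$ vanishes identically: a direct $2\times 2$ computation shows both triple products equal $\begin{pmatrix} 0 & a_1 a_3 c_2 \\ a_2 c_1 c_3 & 0 \end{pmatrix}$. With this cancellation in hand, every remaining term is a scalar multiple of $\sigma_3$, and (\ref{matr necessary}) reduces to the single scalar equation
\[
b_2(a_1 c_3 - a_3 c_1) - b_1(a_2 c_3 - a_3 c_2) - b_3(a_1 c_2 - a_2 c_1) = 0,
\]
which is precisely the Laplace expansion of (\ref{det}) along the middle column. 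The equivalence of (\ref{det}) with the existence of $(\lambda,\mu,\nu)\neq 0$ satisfying (\ref{lin dep}) is then standard linear algebra.
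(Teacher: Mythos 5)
Your proof is correct and amounts to the same computation as the paper's: both clear $A_2^{-1}$ by the adjugate $b_2I-N_2$, expand the triple product, and find that the $1\leftrightarrow 3$ antisymmetrization reduces to the single scalar identity $b_2(a_1c_3-a_3c_1)-b_1(a_2c_3-a_3c_2)-b_3(a_1c_2-a_2c_1)=0$, i.e.\ the cofactor expansion of (\ref{det}) along the middle column. Your splitting $A_i=b_iI+N_i$ is merely a tidier bookkeeping of the paper's brute-force multiplication --- the identity $N_1N_2N_3=N_3N_2N_1$ and the fact that all surviving terms are multiples of $\mathrm{diag}(1,-1)$ is exactly what the paper calls ``a direct inspection shows that the off-diagonal terms give no conditions, while the diagonal ones give only one condition.''
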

\begin{proof}
We have:
\[
\begin{pmatrix}b_1 & a_1 \\ c_1 & b_1\end{pmatrix}\begin{pmatrix} b_2 & -a_2 \\ -c_2 & b_2\end{pmatrix}\begin{pmatrix} b_3 & a_3 \\ c_3 & b_3 \end{pmatrix}
\]
\[
=\begin{pmatrix} b_1b_2b_3-a_1c_2b_3+a_1b_2c_3-b_1a_2c_3 & b_1b_2a_3-a_1c_2a_3+a_1b_2b_3-b_1a_2b_3 \\
c_1b_2b_3-b_1c_2b_3+b_1b_2c_3-c_1a_2c_3 & b_1b_2b_3-c_1a_2b_3+c_1b_2a_3-b_1c_2a_3\end{pmatrix}.
\]
This has to be equal to the same matrix with $1\leftrightarrow 3$. A direct inspection shows that the off-diagonal terms give no conditions, while the diagonal ones give only one condition:
\[
-a_1c_2b_3+a_1b_2c_3-b_1a_2c_3=-b_1c_2a_3+c_1b_2a_3-c_1a_2c_3,
\]
which is nothing but (\ref{det}).
\end{proof}

{\em Proof of Theorem \ref{th star-triang classif}.}
Formula (\ref{main cond Moutard}) describes propagation of condition (\ref{lin dep}) to the lattice $\bbZ^N$. To show that this condition propagates under flipping the 3D cubes, consider an elementary cube with (\ref{lin dep}) satisfied on three faces adjacent to one vertex. We have to show that for the opposite three faces there follows
\begin{equation}\label{hat lin dep}
\lambda \hat{a}_i-\mu \hat{b}_i+ \nu \hat{c}_i=0, \quad i=1,2,3.
\end{equation}
From
\[
-\hat{A}_2^{-1}=A_1+A_3+A_1A_2^{-1}A_3
\]
we derive:
\begin{eqnarray*}
\hat{a}_2 & \sim & (a_1+a_3)(b_2^2-a_2c_2)+b_1b_2a_3-a_1c_2a_3+a_1b_2b_3-b_1a_2b_3,\\
\hat{c}_2 & \sim & (c_1+c_3)(b_2^2-a_2c_2)+c_1b_2b_3-b_1c_2b_3+b_1b_2c_3-c_1a_2c_3,\\
-\hat{b}_2 & \sim & (b_1+b_3)(b_2^2-a_2c_2)+b_1b_2b_3-a_1c_2b_3+a_1b_2c_3-b_1a_2c_3.
\end{eqnarray*}
We further compute:
\begin{eqnarray*}
\lambda\hat{a}_2-\mu\hat{b}_2+\nu\hat{c}_2 & \sim &
b_1b_2(\lambda a_3+\mu b_3+\nu c_3)-a_1c_2(\lambda a_3+\mu b_3+\nu c_3)\\
 & & +b_2b_3(\lambda a_1+\mu b_1+\nu c_1)-a_2c_3(\lambda a_1+\mu b_1+\nu c_1)\\
 & & +a_1c_3(\lambda a_2+\mu b_2+\nu c_2)-b_1b_3(\lambda a_2+\mu b_2+\nu c_2) =0
\end{eqnarray*}
(we added and subtracted $\lambda a_1a_2c_3$, $\mu b_1b_2b_3$, and $\nu a_1c_2c_3$). Analogously, $\lambda\hat{a}_1-\mu\hat{b}_1+\nu\hat{c}_1=\lambda\hat{a}_3-\mu\hat{b}_3+\nu\hat{c}_3=0$.

To parametrize matrix formulas (\ref{matr star triang}) by a system of coupled scalar star-triangle relations, we need a more precise computation. We start with
\begin{equation}\label{matr star triang elem}
    \frac{1}{\hat{\Delta}}\begin{pmatrix} \hat{a}_2 \\ \hat{c}_2 \\ -\hat{b}_2 \end{pmatrix}=
    \begin{pmatrix} a_1+a_3 \\ c_1+c_3 \\ b_1+b_3\end{pmatrix}+\frac{1}{\Delta}
    \begin{pmatrix} b_1b_2a_3-a_1c_2a_3+a_1b_2b_3-b_1a_2b_3 \\
                    c_1b_2b_3-b_1c_2b_3+b_1b_2c_3-c_1a_2c_3 \\
                    b_1b_2b_3-a_1c_2b_3+a_1b_2c_3-b_1a_2c_3
    \end{pmatrix},
\end{equation}
where $\Delta=b_2^2-a_2c_2$ and $\hat{\Delta}=\hat{b}_2^2-\hat{a}_2\hat{c}_2$. Upon condition (\ref{lin dep}), we find:
\begin{equation}\label{Delta}
\Delta=\frac{\lambda a_2^2+\mu a_2b_2+\nu b_2^2}{\nu}=\frac{\lambda}{\nu}(a_2+\xi b_2)(a_2+\eta b_2),
\end{equation}
and, similarly,
\begin{equation}\label{Delta hat}
\hat{\Delta}=\frac{\lambda}{\nu}(\hat{a}_2-\xi\hat{b}_2)(\hat{a}_2-\eta\hat{b}_2).
\end{equation}
Now the latter formula can be put as
\[
\frac{1}{\hat{a}_2-\eta\hat{b}_2}=\frac{\lambda}{\nu}\frac{\hat{a}_2-\xi\hat{b}_2}{\hat{\Delta}},
\]
and upon using the expressions from (\ref{matr star triang elem}), we put this as
\begin{eqnarray}
&& \frac{1}{\hat{a}_2-\eta\hat{b}_2} = \frac{\lambda}{\nu}\Big(a_1+a_3+\xi(b_1+b_3)\Big)
\nonumber\\
&& \ +\frac{\lambda/\nu}{\Delta}\Big(b_1b_2(a_3+\xi b_3)-a_1c_2(a_3+\xi b_3)+a_1b_2(b_3+\xi c_3)-b_1a_2(b_3+\xi c_3)\Big).\qquad\label{proof aux}
\end{eqnarray}
By transforming the right-hand side of the latter equation, one systematically eliminates $c_i$ in favor of $a_i$ and $b_i$, according to (\ref{lin dep}). In particular, one easily finds that
$b_3+\xi c_3=-\frac{\lambda\xi}{\nu}(a_3+\xi b_3)$. Thus, we find:
\begin{eqnarray*}
 && b_1b_2(a_3+\xi b_3)-a_1c_2(a_3+\xi b_3)+a_1b_2(b_3+\xi c_3)-b_1a_2(b_3+\xi c_3)  \\
 && \quad =\Big(b_1b_2+a_1\dfrac{\lambda a_2+\mu b_2}{\nu}\Big)(a_3+\xi b_3)-\dfrac{\lambda\xi}{\nu}(a_1b_2-b_1a_2)(a_3+\xi b_3)  \\
 && \quad =\dfrac{1}{\nu}\Big(\lambda a_1a_2+(\mu-\lambda\xi)a_1b_2+\lambda\xi b_1a_2+\nu b_1b_2\Big)(a_3+\xi b_3)\\
 && \quad =\dfrac{\lambda}{\nu}(a_1+\xi b_1)(a_2+\eta b_2)(a_3+\xi b_3).
\end{eqnarray*}
In the last transformation, we have taken into account that $\mu=\lambda(\xi+\eta)$ and $\nu=\lambda\xi\eta$. Plugging the latter result along with (\ref{Delta}) into (\ref{proof aux}), we finally arrive at
\[
\frac{1}{\hat{a}_2-\eta\hat{b}_2} = \frac{\lambda}{\nu}(a_1+\xi b_1+a_3+\xi b_3)+
\frac{\lambda}{\nu}\frac{(a_1+\xi b_1)(a_3+\xi b_3)}{a_2+\xi b_2},
\]
which demonstrates (one of) equations (\ref{coupled star triang}).
\qed

{\bf Remark.} The oscillating sign $(-1)^{|n|}$ in (\ref{main cond}) can be explained in the following way. The matrices $A^{ij}$ in (\ref{matr Moutard}) can be interpreted as mapping from the diagonal $(n+e_i,n+e_j)$ to the diagonal $(n,n+e_i+e_j)$ of an elementary square $\sigma^{ij}$. A more geometric way would be to consider matrices mapping black diagonals (connecting points with $|n|$ even) to white diagonals (connecting points with $|n|$ odd). These are matrices
 \[
 B^{ij}=\big(A^{ij}\big)^{(-1)^{|n|}}.
 \]
 For these matrices, constraint (\ref{lin dep}) holds on all faces of $\bbZ^N$, without changing the sign of $\mu$.
 \medskip

 {\bf Example 1.} $\mu=0$, $\lambda/\nu=1$, so that $a^{ij}=-c^{ij}$, and $\{\xi,\eta\}=\{\pm \I\}$. Since $\xi=-\eta$, it is more convenient to replace definitions (\ref{pq}), (\ref{pq hat}) by the ``non-oscillating'' ones, namely,
 \[
 p^{ij}=a^{ij}+\I b^{ij}, \quad q^{ij}=a^{ij}-\I b^{ij}=\bar{p}^{ij}.
 \]
Thus, we can set
\[
 a^{ij}=-c^{ij}=\Re(p^{ij}),\quad b^{ij}=\Im(p^{ij}),
\]
where the complex-valued coefficients $p^{ij}=-p^{ji}$ evolve according to the star-triangle equation:
\begin{equation}\label{B.p}
 \frac{1}{p^{ij}_k}=\frac{p^{ij}p^{jk}+p^{jk}p^{ki}+p^{ki}p^{ij}}{p^{ij}}.
\end{equation}

 {\bf Example 2.} $\mu=0$, $\lambda/\nu=-1$, so that $a^{ij}=c^{ij}$, and $\{\xi,\eta\}=\{\pm 1\}$. Like in the previous example, we replace definitions (\ref{pq}), (\ref{pq hat}) by the ``non-oscillating'' ones, namely,
 \[
 p^{ij}=a^{ij}+b^{ij}, \quad q^{ij}=a^{ij}-b^{ij}.
 \]
Thus, we can set
\[
 a^{ij}=c^{ij}=\frac{1}{2}(p^{ij}+q^{ij}),\quad b^{ij}=\frac{1}{2}(p^{ij}-q^{ij}),
\]
where the coefficients $p^{ij}=-p^{ji}$, $q^{ij}=-q^{ji}$ evolve according to two (uncoupled) star-triangle equations:
\begin{equation}\label{B.ab}
 \frac{1}{p^{ij}_k}=-\frac{p^{ij}p^{jk}+p^{jk}p^{ki}+p^{ki}p^{ij}}{p^{ij}},\qquad
 \frac{1}{q^{ij}_k}=-\frac{q^{ij}q^{jk}+q^{jk}q^{ki}+q^{ki}q^{ij}}{q^{ij}}.
\end{equation}

{\bf Example 3.} $\mu=0$, $\lambda=0$, thus $c^{ij}=0$. The parametrization of Theorem \ref{th star-triang classif}, taken literally, does not work, but the matrix star-triangle relations (\ref{matr star triang}) for triangular matrices $A^{ij}$ can be easily solved. Indeed, from (\ref{matr star triang elem}) with $c^{ij}=0$ we find  directly:
\begin{eqnarray}
 -\dfrac{1}{b^{ij}_k} & = & b^{jk}+b^{ki}+\dfrac{b^{jk}b^{ki}}{b^{ij}}, \label{ex3: b}\\
 \dfrac{a^{ij}_k}{(b^{ij}_k)^2} & = & a^{jk}+a^{ki}
  +\dfrac{b^{ki}a^{jk}}{b^{ij}}+\dfrac{b^{jk}a^{ki}}{b^{ij}}
  -\dfrac{b^{jk}b^{ki}a^{ij}}{(b^{ij})^2}. \label{ex3: a}
\end{eqnarray}
%Setting $p^{ij}=1/b^{ij}$ and $s^{ij}=a^{ij}/(b^{ij})^2$, we arrive at the conclusion that coefficients $p^{ij}=-p^{ji}$, $s^{ij}=-s^{ji}$ evolve according to
%\begin{equation}\label{B.as}
%\begin{gathered}
% p^{ij}_k=-\frac{p^{ij}+p^{jk}+p^{ki}}{p^{jk}p^{ki}},\\
% s^{ij}_k=-\frac{s^{ij}}{p^{jk}p^{ki}}
%  +\frac{(p^{ki}+p^{ij})s^{jk}}{(p^{jk})^2p^{ki}}
%  +\frac{(p^{ij}+p^{jk})s^{ki}}{p^{jk}(p^{ki})^2}.
%\end{gathered}
%\end{equation}
Thus, evolution of the entries $b^{ij}$ is governed by the star-triangle relation, while the equation satisfied by the entries $a^{ij}$ can be interpreted as the linearized equation for $b^{ij}$.
\medskip

{\bf Example 4.} $\nu=0$, $\lambda=0$, thus $b^{ij}=0$. Again, the parametrization of Theorem \ref{th star-triang classif}, taken literally, does not work, but the matrix star-triangle relations (\ref{matr star triang}) for off-diagonal matrices $A^{ij}$ can be easily solved directly: from (\ref{matr star triang elem}) with $b^{ij}=0$ we find:
\begin{eqnarray}
 -\frac{1}{c^{ij}_k} & = & a^{jk}+a^{ki}+\frac{a^{jk}a^{ki}}{a^{ij}}, \label{ex4: c}\\
 -\frac{1}{a^{ij}_k} & = & c^{jk}+c^{ki}+\frac{c^{jk}c^{ki}}{c^{ij}}, \label{ex4: a}
\end{eqnarray}
so that the entries $a^{ij}$ and $c^{ij}$  evolve according to the coupled star-triangle equations. The admissible reduction $a^{ij}=-c^{ij}$ is also the special case $p^{ij}=a^{ij}\in \bbR$ of Example 1.

\subsection{From vector Moutard equations to discrete pluriharmonic functions}

Now, we can translate the results of the previous section into the language of discrete pluriharmonic functions.

\begin{theorem}
Lagrangians
\[
L^{ij}=\frac{1}{2}\alpha^{ij}(u_{ij}-u)^2+\beta^{ij}(u_{ij}-u)(u_i-u_j)+
\frac{1}{2}\gamma^{ij}(u_i-u_j)^2.
\]
assigned to oriented plaquettes $\sigma^{ij}(n)=(n,n+e_i,n+e_i+e_j,n+e_j)$ in $\mathbb Z^N$ are consistent, i.e. form a pluri-Lagrangian system, if and only if their coefficients satisfy
\begin{equation}\label{main cond}
\lambda-\mu(-1)^{|n|}\beta^{ij}+\nu((\beta^{ij})^2-\alpha^{ij}\gamma^{ij})=0
\end{equation}
for some fixed triple $(\lambda,\mu,\nu)$, where $|n|=n_1+\ldots+n_m$,  and their evolution is expressed (in the generic case, when $\lambda,\nu\neq 0$) through a solution of {\em coupled star-triangle relations} (\ref{coupled star triang}) via the following relations:
\begin{equation}\label{pq Lagr}
p^{ij}=\frac{1-\xi\beta^{ij}}{\alpha^{ij}},\quad q^{ij}=\frac{1-\eta\beta^{ij}}{\alpha^{ij}},
\end{equation}
and
\begin{equation}\label{pq hat Lagr}
p^{ij}_k=\frac{1+\xi\beta^{ij}_k}{\alpha^{ij}_k},\quad q^{ij}_k=\frac{1+\eta\beta^{ij}_k}{\alpha^{ij}_k},
\end{equation}
where $-\xi$, $-\eta$ are the two roots of the quadratic equation $\lambda \xi^2+\mu\xi+\nu=0$.
\end{theorem}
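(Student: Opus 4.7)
The plan is to obtain this theorem as a direct translation of Theorem~\ref{th star-triang classif} under the bijection between quadratic Lagrangians and matrix coefficients of vector Moutard equations established earlier in the section. The conceptual work has already been done: Lemma~\ref{Lemma conj harm} and the paragraph following it show that, via the conjugate pluriharmonic function $v$, the system of eight corner equations for a pluriharmonic function $u$ is equivalent to the vector Moutard equation (\ref{matr Moutard}) with matrix (\ref{matr A}), and the change of variables (\ref{eq: A thru L})
\[
b^{ij}=-\frac{\beta^{ij}}{\alpha^{ij}},\quad a^{ij}=\frac{1}{\alpha^{ij}},\quad c^{ij}=\frac{(\beta^{ij})^2-\alpha^{ij}\gamma^{ij}}{\alpha^{ij}}
\]
is invertible (for $\alpha^{ij}\neq 0$). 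Moreover, the initial data for the two problems are in bijection. Consequently, the pluri-Lagrangian system built from the $L^{ij}$ is consistent if and only if the corresponding vector Moutard system is consistent, and Theorem~\ref{th star-triang classif} applies.

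Carrying this out, the first step is to substitute (\ref{eq: A thru L}) into the linear constraint (\ref{main cond Moutard}). Clearing the common denominator $\alpha^{ij}$ immediately produces (\ref{main cond}):
\[
\lambda-\mu(-1)^{|n|}\beta^{ij}+\nu\bigl((\beta^{ij})^2-\alpha^{ij}\gamma^{ij}\bigr)=0,
\]
with the same universal triple $(\lambda,\mu,\nu)$. The second step is to substitute (\ref{eq: A thru L}) into the parametrization (\ref{pq}):
\[
p^{ij}=a^{ij}+\xi b^{ij}=\frac{1}{\alpha^{ij}}-\xi\frac{\beta^{ij}}{\alpha^{ij}}=\frac{1-\xi\beta^{ij}}{\alpha^{ij}},
\]
and analogously for $q^{ij}$; this yields (\ref{pq Lagr}). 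For the shifted values (\ref{pq hat}) the sign in front of $\xi b^{ij}_k$ is reversed, so the substitution produces
\[
p^{ij}_k=a^{ij}_k-\xi b^{ij}_k=\frac{1+\xi\beta^{ij}_k}{\alpha^{ij}_k},
\]
giving (\ref{pq hat Lagr}). The coupled star-triangle relations (\ref{coupled star triang}) are then inherited verbatim from Theorem~\ref{th star-triang classif}.

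The only point requiring care, and the mild obstacle, is keeping track of the oscillating sign $(-1)^{|n|}$ and of the sign flip between (\ref{pq}) and (\ref{pq hat}) when rewriting things in terms of $(\alpha^{ij},\beta^{ij},\gamma^{ij})$. The remark following the proof of Theorem~\ref{th star-triang classif} clarifies the geometric origin of both phenomena (passage from $A^{ij}$ to $B^{ij}=(A^{ij})^{(-1)^{|n|}}$), so one simply has to verify that, after translation, the sign of $\mu\beta^{ij}$ in (\ref{main cond}) and the signs in the numerators of (\ref{pq Lagr})--(\ref{pq hat Lagr}) are internally consistent. Finally, one should record that the degenerate cases $\lambda=0$ or $\nu=0$ (corresponding to $\alpha^{ij}=0$ or $\gamma^{ij}=(\beta^{ij})^2/\alpha^{ij}$) correspond to Examples~3 and~4 of the previous subsection, and the non-generic parametrization can be imported from there if desired.
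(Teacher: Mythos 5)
Your proposal is correct and coincides with the paper's own (implicit) argument: the paper offers no separate proof of this theorem, presenting it precisely as the translation of Theorem~\ref{th star-triang classif} under the substitution (\ref{eq: A thru L}), and your computations deriving (\ref{main cond}), (\ref{pq Lagr}) and (\ref{pq hat Lagr}) from (\ref{main cond Moutard}), (\ref{pq}) and (\ref{pq hat}) are exactly right. One peripheral slip in your closing remark: the degenerate case $\lambda=0$ (with $\mu=0$) corresponds to $(\beta^{ij})^2-\alpha^{ij}\gamma^{ij}=0$ and the case $\nu=0$ (with $\lambda=0$) to $\beta^{ij}=0$, not to $\alpha^{ij}=0$; this does not affect the theorem, which only asserts the generic case $\lambda,\nu\neq 0$.
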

{\bf Example 1.} $\mu=0$, $\lambda/\nu>0$, without loss of generality one can normalize
\[
(\beta^{ij})^2-\alpha^{ij}\gamma^{ij}=-1.
\]
The Lagrangians can be parametrized as
\[
\alpha^{ij}=\frac{1}{\Re(p^{ij})},\quad \beta^{ij}=\frac{\Im(p^{ij})}{\Re(p^{ij})},\quad
\gamma^{ij}=\frac{|p^{ij}|^2}{\Re(p^{ij})},
\]
where $p^{ij}$ is a complex-valued solution of the star-triangle relation (\ref{B.p}). This is the case of discrete complex analysis considered in Section \ref{s:H}.
\medskip

{\bf Example 2.} $\mu=0$, $\lambda/\nu<0$, without loss of generality one can normalize
\[
(\beta^{ij})^2-\alpha^{ij}\gamma^{ij}=1.
\]
The Lagrangians can be parametrized as follows:
\[
\alpha^{ij}=\frac{2}{p^{ij}+q^{ij}},\quad \beta^{ij}=\frac{p^{ij}-q^{ij}}{p^{ij}+q^{ij}}, \quad
\gamma^{ij}=-\frac{2p^{ij}q^{ij}}{p^{ij}+q^{ij}},
\]
so that
\begin{equation}\label{B.L-ab}
 L^{ij}=\frac{1}{p^{ij}+q^{ij}}
  (u_{ij}-u+p^{ij}(u_i-u_j))(u_{ij}-u-q^{ij}(u_i-u_j)),
\end{equation}
where the coefficients $p^{ij}=-p^{ji}$, $q^{ij}=-q^{ji}$ evolve according to two star-triangle equations (\ref{B.ab}).
\medskip

{\bf Example 3.} $\mu=0$, $\lambda=0$, thus $(\beta^{ij})^2-\alpha^{ij}\gamma^{ij}=0$. The Lagrangians can be parametrized as
\begin{equation}\label{B.L-as}
 L^{ij}=\frac{1}{a^{ij}}(u_{ij}-u+b^{ij}(u_i-u_j))^2,
\end{equation}
where the evolution of the coefficients $b^{ij}=-b^{ji}$ and $a^{ij}=-a^{ji}$ is governed by equations (\ref{ex3: b}), (\ref{ex3: a}).
\medskip

{\bf Example 4.} $\nu=0$, $\lambda=0$, thus $\beta^{ij}=0$. The Lagrangians can be parametrized as
\begin{equation}\label{B.L-double}
 L^{ij}=\frac{1}{a^{ij}}(u_{ij}-u)^2-c^{ij}(u_i-u_j)^2,
\end{equation}
where the coefficients $a^{ij}=-a^{ji}$, $c^{ij}=-c^{ji}$ evolve according to the coupled star-triangle equations (\ref{ex4: c}), (\ref{ex4: a}). The admissible reduction $a^{ij}=-c^{ij}$ is the special case $p^{ij}\in \bbR$ of Example 1. In the complex analysis interpretation this is the case of quads with orthogonal diagonals \cite{M_CMP, Skopenkov}.

%-------------------------------------------------------------------------------
\section{Pluri-Lagrangian systems with general quadratic Lagrangians}\label{s:Qnet}

Diagonal Lagrangians do not exhaust the class of linear pluri-Lagrangian systems. To show this, we now give the probably simplest example of such a system, whose Lagrangians belong to the so called three-point class.

\begin{theorem}\label{th:BKP-a}
Lagrangians
\begin{equation}\label{B.L-a}
 L^{ij}=(u_i-u)^2-(u_j-u)^2-p^{ij}(u_i-u_j)^2
\end{equation}
are consistent if and only if parameters $p^{ij}=-p^{ji}$ evolve according to the star-triangle relation
\begin{equation}\label{B.a}
 p^{ij}_k=-\frac{p^{ij}}{p^{ij}p^{jk}+p^{jk}p^{ki}+p^{ki}p^{ij}}.
\end{equation}
\end{theorem}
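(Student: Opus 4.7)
The strategy is to write out the full system of eight corner equations for the action $S^{ijk}$ built from $L^{ij}$, $L^{jk}$, $L^{ki}$ and their shifts, and then use Definition \ref{def: corner eqs consist}: consistency means this linear system has rank exactly $2$. The Lagrangian (\ref{B.L-a}) is three-point (it does not depend on the fourth corner $u_{ij}$ of the plaquette), so two corner equations will turn out to be trivial, three will constrain the ``lower'' fields through $p^{ij},p^{jk},p^{ki}$, and three will constrain the ``upper'' fields through $p^{ij}_k,p^{jk}_i,p^{ki}_j$; compatibility between the two triples will force the star-triangle relation (\ref{B.a}).

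First, I would observe two automatic cancellations. Because $L^{ij}$ does not depend on $u_{ij}$, none of $L^{ij}$, $T_k L^{ij}$, $T_i L^{jk}$, $T_j L^{ki}$ involves $u_{ijk}$, so $\partial S^{ijk}/\partial u_{ijk}=0$ identically. And in $L^{ij}+L^{jk}+L^{ki}$ the squares $(u_i-u)^2$, $(u_j-u)^2$, $(u_k-u)^2$ each appear once with sign $+$ and once with sign $-$, so they cancel and the combined unshifted Lagrangian reduces to $-p^{ij}(u_i-u_j)^2-p^{jk}(u_j-u_k)^2-p^{ki}(u_k-u_i)^2$; the shifted pieces do not contain $u$, so $\partial S^{ijk}/\partial u=0$ identically as well. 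These two built-in identities already fit the rank-$2$ requirement.

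Next, I would compute the three ``lower'' corner equations $\partial S^{ijk}/\partial u_i$, $\partial S^{ijk}/\partial u_j$, $\partial S^{ijk}/\partial u_k$. Using $\partial(T_i L^{jk})/\partial u_i=2(u_{ik}-u_{ij})$ (the only shifted piece contributing) and the direct derivatives of $L^{ij}$ and $L^{ki}$, these come out, after dividing by $2$, as
\[
u_{ij}-u_{ik}=p^{ij}(u_i-u_j)+p^{ki}(u_i-u_k),
\]
and its two cyclic images, summing identically to zero. In the same way the three ``upper'' corner equations $\partial S^{ijk}/\partial u_{ij}$ etc.\ give (after dividing by $2$)
\[
u_i-u_j=p^{jk}_i(u_{ik}-u_{ij})+p^{ki}_j(u_{jk}-u_{ij})
\]
and cyclic shifts, again summing to zero. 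So in each triple only two equations are independent, and consistency of the whole system is equivalent to the requirement that the upper triple be a consequence of the lower triple.

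Finally, I would substitute the lower equations into the upper ones. Using $X=u_i-u_j$ and $Y=u_i-u_k$ as free parameters and replacing $u_{ij}-u_{ik}$ and $u_{ij}-u_{jk}$ by their lower-equation expressions, the condition $\partial S^{ijk}/\partial u_{ij}=0$ becomes
\[
-X=(p^{ij}p^{jk}_i+p^{ij}p^{ki}_j+p^{jk}p^{ki}_j)X+(p^{ki}p^{jk}_i-p^{jk}p^{ki}_j)Y.
\]
For this to hold for all initial data, $p^{ki}p^{jk}_i=p^{jk}p^{ki}_j$ and the scalar term must equal $-1$. The two cyclic versions of these conditions force $p^{ij}_k/p^{ij}=p^{jk}_i/p^{jk}=p^{ki}_j/p^{ki}=:\lambda$, and the scalar term then reads $\lambda(p^{ij}p^{jk}+p^{jk}p^{ki}+p^{ki}p^{ij})=-1$, which is exactly the star-triangle relation (\ref{B.a}). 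Conversely, once this relation holds, a short back-substitution (the same calculation read in reverse) shows that the three upper equations follow from the three lower ones, so the full $8\times 8$ system has rank exactly $2$, confirming sufficiency. The only mildly tricky part is keeping the cyclic relabelling consistent; no single step is deep.
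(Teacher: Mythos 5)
Your proposal is correct and follows essentially the same route as the paper: it identifies the two trivially satisfied corner equations (since $S^{ijk}$ does not depend on $u$ or $u_{ijk}$), derives the two triples of corner equations (the paper's (\ref{Ei}), (\ref{Eij})), notes each triple sums to zero, and obtains the star-triangle relation (\ref{B.a}) by requiring the upper triple to follow from the lower one for arbitrary initial data. The conditions you extract ($p^{ki}p^{jk}_i=p^{jk}p^{ki}_j$ and the scalar relation) are cyclic relabelings of the pair displayed in the paper, and your resolution via the common ratio $\lambda$ matches the paper's "easily solvable" step.
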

Note that Definition \ref{def: corner eqs consist} of consistency of corner equations reads in the case of three-point 2-forms literally the same, but its interpretation should be slightly modified. The quantity $S^{ijk}$ defined as in (\ref{eq: Sijk}) does not depend on $u$, $u_{ijk}$. Thus, system of corner equations (\ref{eq: corner eqs}) consists of 6 equations, each of them involving 5 variables. Initial data consist of four values (say, $u_i$, $u_j$, $u_k$, $u_{ij}$), then two of the corner equations determine the remaining two values ($u_{jk}$ and $u_{ik}$ in the above example), and then the remaining four corner equations should be fulfilled automatically.

{\em Proof of Theorem \ref{th:BKP-a}.}
Corner equations for the Lagrangians (\ref{B.L-a}) read as follows:
\begin{eqnarray}
 && u_{ij}-u_{ik}-p^{ij}(u_i-u_j)+p^{ki}(u_k-u_i)=0, \label{Ei}\\
 && u_i-u_j+p^{jk}_i(u_{ij}-u_{ik})-p^{ki}_j(u_{jk}-u_{ij})=0. \label{Eij}
\end{eqnarray}
A direct check shows that two of equations (\ref{Ei}) imply the third one, while equations (\ref{Eij}) with $u_{jk}$ and $u_{ik}$ determined from (\ref{Ei}) are equivalent to
\[
p^{ij}_k p^{ki}=p^{ki}_j p^{ij},\quad p^{ij}_k(p^{jk}+p^{ki})+p^{ki}_jp^{jk}=-1.
\]
These equations are easily solvable, with the solution given by (\ref{B.a}). \qed

%As we mentioned already, the problem of classification of pluri-Lagrangian systems with general
%quadratic Lagrangians remains open.
%A rough estimate of the maximal number of parameters can be
%obtained by the following computation. The action functional $S$
%(\ref{S}) on the elementary cube is a quadratic form of
%eight variables not containing the terms $uu_{ijk}$,
%$u_iu_{jk}$ corresponding to the large diagonals of the cube.
%It contains $32$ terms. The
%system $\delta S=0$ consists of $8$ linear equations for $8$
%unknowns. We require that the its rank be equal $2$, according to definition
%\ref{def:L-v}. This yields a system of $36$ nonlinear ({\bf cubic?, what are these equations why not 54?})
%equations on the
%coefficients of $S$, however some of them are not independent. A
%computer algebra computation shows that all the coefficients of $S$ can be rationally parametrized
%through $11$ free variables. The explicit formulas are not of particular use
%and we will not write them. The homogeneity allows us to scale two independent equations independently,
%thus reducing the number of free parameters to $9$.
%Taking into account the 3 faces to 3 faces evolution, i.e. the parameters on 3 faces determine the
%parameters on the other 3 faces, we conclude
%that Lagrangians $L^{ij}$ in $S$ may depend on $3$ parameters at most.

We now give an example of  a pluri-Lagrangian system with 3 parameters per face.

\begin{theorem}\label{th:Q}
Lagrangians
\begin{equation}\label{Q.L}
 L^{ij}=\frac{1}{2s^{ij}c^{ij}c^{ji}}
  (u_{ij}-c^{ji}u_i-c^{ij}u_j-c^{ij}c^{ji}u)^2,
 \quad s^{ij}=-s^{ji}
\end{equation}
are consistent in the sense of the definitions \ref{def:L-consistency'}, \ref{def: corner eqs consist}, if and only if the coefficients are governed by the following (4D-consistent) mapping:
\begin{eqnarray}
 c^{ij}_k & = & \frac{1}{c^{kj}}(c^{ik}c^{ki}-c^{ik}c^{kj}-c^{ki}c^{ij}), \label{Q.c}\\
 s^{ij}_k & = & c^{ki}c^{kj}s^{ij}
     +c^{ki}(c^{ij}-c^{ik})s^{jk}
     +c^{kj}(c^{ji}-c^{jk})s^{ki}. \label{Q.s}
\end{eqnarray}
\end{theorem}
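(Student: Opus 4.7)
Each Lagrangian factors as a perfect square,
\[
L^{ij}=\frac{(Q^{ij})^2}{2s^{ij}c^{ij}c^{ji}},\qquad Q^{ij}:=u_{ij}-c^{ji}u_i-c^{ij}u_j-c^{ij}c^{ji}u,
\]
so the cube action is a signed sum of six rank-one quadratic forms,
\[
S^{ijk}=\sum_{\sigma}\varepsilon_\sigma\mu_\sigma(Q_\sigma)^2,
\]
where $\sigma$ ranges over the six faces of the cube $\sigma^{ijk}$, $Q_\sigma\in\{Q^{ij},Q^{jk},Q^{ki},Q^{ij}_k,Q^{jk}_i,Q^{ki}_j\}$, $\varepsilon_\sigma=\mp 1$ for base/shifted faces, and $\mu_\sigma\neq 0$ is the normalising constant of $L_\sigma$. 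Writing $q_\sigma\in\bbC^8$ for the coefficient vector of $Q_\sigma$, the corner equations form the linear system whose matrix is the $8\times 8$ symmetric $A=D^T M_\mu D$, with $D:\bbC^8\to\bbC^6$ the $Q$-evaluation $u\mapsto(Q_\sigma(u))_\sigma$ and $M_\mu$ the diagonal with entries $\varepsilon_\sigma\mu_\sigma$. Consistency in the sense of Definition~\ref{def: corner eqs consist} amounts to $\mathrm{rank}(A)=2$.

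The plan has two stages. Generically the six $q_\sigma$ are linearly independent and $\mathrm{rank}(A)=\mathrm{rank}(D)=6$; the first drop to $\mathrm{rank}(A)\le 4$ therefore forces $\dim\mathrm{span}\{q_\sigma\}=4$, i.e.\ two independent linear relations among the $q_\sigma$. A direct calculation identifies this collinearity with 3D-consistency of the linear quad-equation $Q^{ij}=0$: from generic $(u,u_i,u_j,u_k)$, the three base equations $Q^{ij}=Q^{jk}=Q^{ki}=0$ determine $u_{ij},u_{jk},u_{ik}$, and each of the shifted equations $Q^{ij}_k=Q^{jk}_i=Q^{ki}_j=0$ then produces an expression for $u_{ijk}$. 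Matching the coefficients of $u_i,u_j,u_k,u$ in these three expressions yields (\ref{Q.c}), and also gives an explicit basis $\lambda^{(1)},\lambda^{(2)}$ of $\ker D^T\subset\bbC^6$.

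Assuming (\ref{Q.c}), one has $\mathrm{rank}(D)=4$ and $\dim\ker D^T=2$. From $A=D^T M_\mu D$ one reads off
\[
\mathrm{rank}(A)=\mathrm{rank}(D)-\dim\bigl(M_\mu(\mathrm{im}\,D)\cap\ker D^T\bigr),
\]
so $\mathrm{rank}(A)=2$ is equivalent to $\ker D^T\subset M_\mu(\mathrm{im}\,D)$; since $\mathrm{im}(D)=(\ker D^T)^{\perp}$ under the standard bilinear pairing on $\bbC^6$, this is further equivalent to the vanishing, on $\ker D^T$, of the symmetric bilinear form
\[
B(\lambda,\lambda'):=\sum_\sigma\frac{\lambda_\sigma\lambda'_\sigma}{\varepsilon_\sigma\mu_\sigma}.
\]
Writing this out on the basis $\lambda^{(1)},\lambda^{(2)}$ produces three scalar equations, linear in the three unknowns $s^{ij}_k,s^{jk}_i,s^{ki}_j$. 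Substituting the explicit rational expressions for $\lambda^{(r)}$ and simplifying via (\ref{Q.c}), the system solves uniquely, with $s^{ij}_k$ given by (\ref{Q.s}) and the other two obtained by cyclic permutation of the indices. The converse, that (\ref{Q.c})--(\ref{Q.s}) imply $\dim\ker A=6$, is a direct substitution into the matrix above.

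It remains to check 4D-consistency of the map (\ref{Q.c})--(\ref{Q.s}) on an elementary $4$-cube: one verifies that $c^{ij}_{kl}$ computed as $(c^{ij}_k)_l$ and as $(c^{ij}_l)_k$ agree, and analogously for $s^{ij}_{kl}$, as rational identities in the base parameters. This is a standard propagation-of-constraint argument parallel to the one in Theorem~\ref{th star-triang classif}. The main obstacle of the whole proof is the algebraic collapse in the second stage: showing that, after substitution of the rather intricate star-triangle-like expressions for $\lambda^{(1)},\lambda^{(2)}$, the three a priori independent bilinear vanishings simplify cleanly to (\ref{Q.s}). The 4D-consistency check at the end is conceptually easier but computationally demanding, and in practice is best verified with computer algebra.
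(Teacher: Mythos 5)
Your framework is sound and, in fact, considerably more explicit than the paper's own proof, which consists of a single sentence deferring to a computer algebra verification by V.~Adler (the paper only remarks afterwards that the corner equations are corollaries of the linear Q-net equation $u_{ij}=c^{ji}u_i+c^{ij}u_j+c^{ij}c^{ji}u$ and that consistency includes its compatibility conditions (\ref{Q.c})). What you add is a clean linear-algebra skeleton that the paper leaves implicit: writing the corner system as $A=D^TM_\mu D$, using $\rank(A)=\rank(D)-\dim\bigl(M_\mu(\mathrm{im}\,D)\cap\ker D^T\bigr)$ together with $\rank(D)\ge 4$ (forced by the leading terms $u_{ij},u_{jk},u_{ki},u_{ijk}$) to show that $\rank(A)=2$ splits into exactly two layers --- 3D-consistency of the quad-equation $Q^{ij}=0$, which is equivalent to $\rank(D)=4$ and yields (\ref{Q.c}), and the vanishing of the induced bilinear form $B$ on the two-dimensional $\ker D^T$, which is linear in the shifted $s$'s and yields (\ref{Q.s}). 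This is a genuinely useful structural decomposition: it explains \emph{why} the $c$- and $s$-evolutions decouple and why (\ref{Q.s}) is linear in the $s$'s. What it does not buy is elimination of the symbolic work: the computation of the kernel basis $\lambda^{(1)},\lambda^{(2)}$, the verification that the three equations $B(\lambda^{(r)},\lambda^{(s)})=0$ collapse to (\ref{Q.s}) (including non-degeneracy of that $3\times 3$ linear system, which is needed for the ``only if'' direction), and the 4D-consistency check all remain computer-algebra tasks --- exactly the content the paper outsources.

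One small step to tighten: your claim that ``the first drop to $\rank(A)\le 4$ forces $\dim\mathrm{span}\{q_\sigma\}=4$'' is not quite what you need; the inequality $\rank(A)\ge 2\rank(D)-6$ only gives $\rank(D)\le 5$ from $\rank(A)\le 4$. The correct chain is: $\rank(A)=2$ forces $\rank(D)\le 4$ via that inequality, and combined with $\rank(D)\ge 4$ this pins down $\rank(D)=4$. With that repair the outline is correct.
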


\begin{proof}
We present a computer algebra proof by Vsevolod Adler in \linebreak
{\tt page.math.tu-berlin.de/$\sim$bobenko}
\end{proof}

The system in Theorem \ref{th:Q} is related to discrete conjugated nets (Q-nets).
Recall that a Q-net is a multidimensional lattice with planar
quadrilaterals \cite{DDG}. An obvious analytic description of such nets, which clearly belong to projective geometry, is given in the homogeneous coordinates by the system of linear equations
\begin{equation}\label{Q.u-p}
 u_{ij}=c^{ji}u_i+c^{ij}u_j+d^{ij}u,\quad d^{ij}=d^{ji}.
\end{equation}
A convenient  gauge (choice of the representative of homogeneous coordinates) can be achieved as follows. The compatibility conditions of (\ref{Q.u-p}) contain the conservation law
\[
 \frac{r^{ij}_k}{r^{ij}}=
 \frac{r^{jk}_i}{r^{jk}}=
 \frac{r^{ki}_j}{r^{ki}},\quad r^{ij}=\frac{c^{ij}c^{ji}}{d^{ij}}
\]
which allows to introduce a function $g$ such that
\[
 \frac{c^{ij}c^{ji}}{d^{ij}}=\frac{gg_{ij}}{g_ig_j}.
\]
Now, the substitution $\tilde u=u/g$ leads to equation
\begin{equation}\label{Q.u}
 u_{ij}=c^{ji}u_i+c^{ij}u_j+c^{ij}c^{ji}u.
\end{equation}
This gauge was introduced in \cite{Sergeev2000a,Sergeev2000b}.  The compatibility condition for the linear system (\ref{Q.u}) is given by  (\ref{Q.c}). Corner equations for the Lagrangians (\ref{Q.L}) are corollaries of (\ref{Q.u}).
The consistency conditions of corner equations include the consistency equations (\ref{Q.c}) for (\ref{Q.u}).

%--------------------------------------------------------------------------
\section{Concluding remarks}\label{s:conclusion}
A classification of general linear pluri-Lagrangian systems remains an interesting open problem.
Such systems may lead to new 3D discrete integrable systems for the coefficients generalizing the star-triangle equation.

An interesting open problem is to develop an analysis of discrete harmonic and discrete
holomorphic functions on quad-graphs with arbitrary quadrilaterals based on the
extension to pluriharmonic functions.  For example, one can hope to obtain
explicit formulas for $\exp,\log$ and Green's functions like in the case of
rhombic (isoradial) graphs \cite{Ke, BMS, ChS}. The difference is that in the case of arbitrary quadrilaterals
we are dealing with a 3 dimensional integrable system in contrast to the isoradial case where
the corresponding integrable system is 2 dimensional. Unfortunately analytic methods for 3 dimensional integrable systems are not yet that well developed as for 2 dimensional ones.

Let us also mention that in all cases found by us the Lagrangians $L^{ij}$ of pluri-Lagrangian linear systems admit a factorization into linear factors. The role of this property is not clear at the moment.

%The notion of variational consistency can be applied as well to the nonlinear
%equations of discrete Toda lattice type. This will be discussed in the
%forthcoming paper [BPS3].

%-------------------------------------------------------------------------------
\section*{Acknowledgements}

We are grateful to Vsevolod Adler for collaboration, in particular for computer algebra investigation of linear pluri-Lagrangian systems.
Research for this article was supported by the SFB/TR 109 ``Discretization in
Geometry and Dynamics''.

%-------------------------------------------------------------------------------
\phantomsection \addcontentsline{toc}{section}{References}
%%%%%%%%%%%%%%%%%%%%%%%%%%%%%%%%%%%%%%%%%%%%%%%%%%%%%%%%%%%%%%%%%%%%%%%%%%%%%%%%%%%%%%%

\bibliographystyle{amsalpha}

\end{document}